\newtheorem{proposition}{Proposition}
\newtheorem{example}{Example}
\begin{document}
\title{An analytic multi-currency model with stochastic volatility and stochastic interest rates\thanks{We are particularly grateful to: Carmine Corvasce, Simone Freschi, Tommaso Gabriellini, Christian Fries, Wolfgang Runggaldier.}}

\author{
 \textrm{Alessandro
Gnoatto}\thanks{Mathematisches Institut der LMU, Theresienstrasse, 39 D-80933 M\"unchen, gnoatto@mathematik.uni-muenchen.de, http://www.alessandrognoatto.com.}
\and\textrm{Martino Grasselli}\thanks{Dipartimento di Matematica, Padova (Italy) and D\'{e}partement
Math\'ematiques et Ing\'enierie Financi\`{e}re, ESILV, Paris La
D\'efense (France), and QUANTA FINANZA S.R.L., Via Cappuccina 40, Mestre (Venezia), Italy.} }

\maketitle

\begin{abstract}
We introduce a tractable multi-currency model with stochastic volatility and correlated stochastic interest rates that takes into account the smile in the FX market and the evolution of yield curves. The pricing of vanilla options on FX rates can be performed efficiently through the FFT methodology thanks to the affinity of the model.  Our framework is also able to describe many non trivial links between FX rates and interest rates: a second calibration exercise highlights the ability of the model to fit simultaneously FX implied volatilities while being coherent with interest rate products.
\end{abstract}

\section{Introduction}

 The FX market is the largest and most liquid financial market in the world. The daily volume of FX option transaction in 2010 was
about 207 billion USD, according to \cite{bis2010}. The stylized facts concerning FX options may be ascribed to two main
cathegories: features of the underlying exchange rates and the implied volatilities respectively. The first and most important feature of FX rates is that the inverse
of an FX rate is still an FX rate, so if $S^{d,f}(t)$ is a model for $EUR/USD$ exchange rate, i.e. the price in dollars of one euro, thus reflecting the point
of view of an American investor, then $S^{f,d}(t)=1/S^{d,f}(t)$ represents the $USD/EUR$ rate, i.e. the price in euros of one dollar, hence representing the perspective of a
European investor. This basic observation may be further generalized so as to construct e.g. triangles of currencies where then the no-arbitrage relation $S^{f,d}(t)=S^{f,k}(t)S^{k,d}(t)$ must hold.
This particular property of FX rates must then be coupled with the presence of a volatility smile for each FX rate involved in a currency triangle,
in such a way that we can consider a model who is able to capture jointly relations among underlyings and their respective implied volatilities.\\

Since the financial crisis, investors look for products with a long time horizon that are supposed to be less sensitive to short-term market
 fluctuations. Following \cite{bookClark11}, the risk involved in such structures may be intuitively understood in terms of a simple example. The value of an ATM call option,
 with maturity $T$, in a Black-Scholes setting, is usually approximated by practitioners by means of the formula $0.04\sigma\sqrt{T}$, see \cite{thesis_bac} and \cite{shte08}. Under this
 approximation, the vega of such a position is simply given by $0.04\sqrt{T}$ and hence scales as square root of $T$. If we look instead
 at the rho risk it can be shown that it scales as $T$. This heuristic observation suggests that when we look at shortdated FX options, the volatility
 smile is a dominant factor, whereas, as the maturity increases, the interest rate uncertainty plays an increasingly important role. As a consequence,
 when we consider longdated FX products, a model that is simultaneously able to take into account volatility and interest rate risk is preferable.\\

Besides the simplest case of \textit{currency swap}, longdated FX products received increasing importance since the appearence
of new structured products like power reverse dual currency notes (PRDC). According to \cite{bookClark11}, for some currency pairs ``\textit{exchange rates in the future tend, once
we actually get to those future dates, to be closer to current spot levels than the forward rates observed today would predict}''. PRDC notes try to exploit this
decoupling phenomenon so as to grant a speculative gain to the investor. PRDC notes pay coupons in case the future exchange rates $S^{d,f}(T_k)$ are closer
to the current level of spot $S^{d,f}(t)$ rather than the forward price $F(t,T_k)=S^{d,f}(t)D^d(t,T_k)/D^f(t,T_k)$, where $D^l(t,T), \ l=d,f$ denote the discount factors
under the currencies $f,d$ respectively. Consider a sequence of payment dates $\left\{T_1,...,T_N\right\}$, where typically $T_N$ is 30 years and $\Delta T_k$ is six months.
At these dates the following coupon is paid
\begin{align}
 V_{T_k}=\Delta T_k \max\left\{N^d\left(\frac{S^{d,f}(T_k)}{S^{d,f}(t)}C^f-C^d\right),0\right\},
\end{align}
where $C^f,C^d$ are constants typically equal to respectively $0.15$ and $0.1$. The coupon may be rewritten as
\begin{align}
 V_{T_k}=\Delta T_k\frac{N^dC^f}{S^{d,f}(t)}\left( S^{d,f}(T_k)- K\right)^+,\quad  K=S^{d,f}(t)\frac{C^d}{C^f}.	
\end{align}
The basic example above can be further generalized so as to include capped or floored coupons. Futher examples of longdated FX products are callable,
trigger or chooser PRDC notes. Given the long maturity of these products, it is essential to account for both oscillations in FX and interest rates.\\ 

\subsection{Related literature}
Historically, it has been quite a standard practice to employ, for the pricing of FX options, models originally designed e.g. for equity options, modulo
minor changes. This approach was initiated by \cite{GarmanKohl}, who adapted the \cite{bla73} model, see \cite{bookWystup06}. Another important example is given by the Heston model, whose FX
adaptation is presented e.g. in \cite{bookClark11} and \cite{wystup2010}. Other examples are the Stein and Stein \cite{steinstein91} and the Hull and White \cite{hullWhite87} model, as described e.g. in \cite{book_Lipton}. Another recent contribution is \cite{lewo13} where they adopted the Wishart multifactor stochastic volatility model introduced in \cite{article_DaFonseca1} in a single exchange rate setting. With a view towards quanto options \cite{BrangerMu} also employ the Wishart process.\\

The typical limitation of the previous approaches is that they neglect the relationships among multiple exchange rates. In fact, the joint presence of triangular relationships and volatility smiles makes it difficult to model all FX rates in a triangle of currencies. In the literature,
there exists a stream of contributions that try to recover the risk neutral probability distribution of the cross exchange rate, either by means of joint
densities or copulas. These approaches represent an evolution of the Breeden and Litzenberger \cite{bre78} approach. Among others, we recall \cite{blissPani2002} and \cite{sch12} and \cite{austing2011}. Copulas have been employed in
\cite{bennettKennedy2004}, \cite{salmonSchneider2006} and \cite{hurdSalmonSchleicher2005}. In the presence of a stochastic volatility model of Heston type, \cite{CarrVerma}
try to solve the joint valuation problem of FX options by specifying the dynamics of two rates influenced by a common stochastic volatility factor. This rather restrictive approach
seems however difficult to extend. Asymptotic formulae for a SABR specification are provided in \cite{shirayaTakahashi2012}. The approach we are interested in is
presented in \cite{gnoatto11}, where a multifactor stochastic volatility model of Heston type is introduced. The model is coherent with respect to triangular relationships
among currencies and allows for a simultaneous calibration of the volatility surfaces of the FX rates involved in a triangle, like $EUR/USD/JPY$. The idea of \cite{gnoatto11}
is inspired by the work of \cite{article_heplaten06}, who consider a model for FX rates of the form
\begin{align}
 S^{i,j}(t)=\frac{\mathcal{G}^i(t)}{\mathcal{G}^j(t)}
\end{align}
where $\mathcal{G}^i,\mathcal{G}^j$ represent the value of the growth optimal portfolio under the two currencies. A similar approach, known as intrinsic
currency valuation framework, has been independently proposed in \cite{article_doust} and \cite{doust2012}.\\

In the industry, the evaluation of longdated FX products is usually performed by coupling Hull and White \cite{hul93} models for the short rates in each monetary area 
with a stochastic process for the FX rate, which is usually assumed to be a geometric Brownian motion, see \cite{bookClark11}. Many authors, have studied the problem of combining uncertain FX rates or stocks with stochastic interest rates.
Such a model represents a starting point for the approach presented in \cite{piterbarg06bis}, where a local volatility effect in the FX process is also introduced.
\cite{hlps2009} derive closed-form pricing formulae under the same kind of three factor model, coupled also with a stochastic volatility process of
\cite{shzhu1999} type, see also \cite{ahlip2008} and \cite{ahlip2010}. This kind of setting implies that both interest rates and volatility may become negative. \cite{haapel2009} extend the previous approach and
also consider approximate solutions when the instantaneous variance follows a square root process. \cite{delray2012} first present a local volatility framework, and then extend it to a stochastic volatility setting and provide approximations by means of a Markovian projection.\\

A desirable feature, when we couple interest rates and FX rates with stochastic volatility, is that we would like to use models such that the instantaneous variance and the interest rates remain positive. A natural choice in this sense is given by
the square root process. However, introducing a non-zero correlation among square root processes breaks the analytical tractability of the model, when we work with affine processes on the canonical state space $\mathbb{R}^m_{\geq 0}\times \mathbb{R}^n$.
 \cite{oosterlee11} attack the problem by providing approximations of the non affine terms in the Kolmogorov PDE satisfied by the characteristic function of the forward exchange rate.
   
\subsection{Main results of the article}
In this paper we propose an extension of \cite{gnoatto11} to the case where the stochastic factors driving the volatilities of the exchange rates belong to the matrix state space $S_d^+$, the cone of positive semidefinite $d\times d$ matrices. Stochastically continuous Markov processes on $S_d^+$ with exponential affine dependence on the initial state space have been characterized in \cite{article_Cuchiero}, see also \cite{Mayer_jump}.
We will see that all good analytical properties of the \cite{gnoatto11}  approach are preserved, while we allow for a more general dynamics.
Our model is at the same time an affine multifactor stochastic volatility model for the FX rate, where the instantaneous variance is driven by a Wishart process, see \cite{article_Bru}, \cite{article_DaFonseca1}, and a 
Wishart affine short rate model, see \cite{article_gou02}, \cite{gra08}, \cite{article_BCT}, \cite{article_Chiarella} and \cite{gnoatto12b}. The model has many interesting features, namely:
\begin{itemize}
 \item it can be jointly calibrated on different FX volatility smiles;
 \item it is coherent with triangular relationships among FX rates;
 \item it allows for closed form solutions for both FX options and basic interest rates derivatives;
 \item it allows for non trivial correlations betwen interest rates and volatilities.
\end{itemize}
The first two interesting features are shared with the model in \cite{gnoatto11}. The other results constitute a novel contribution of the present article, which is outlined as follows:
In section \ref{theModel}, we set up our modelling framework and provide an example highlighting the flexibility of the approach. Section \ref{rn_meas} is
devoted to a complete characterization of all risk-neutral measures associated to the different economies while Section \ref{features} shows that the present setting
allows for stochastic correlations among many economic quantities. Section \ref{pricing} presents all closed form pricing formulas for FX and interest rates derivatives, together with
asymptitic expansions of FX implied volatilities. The numerical treatment of our model is the topic of Section \ref{pricing}, where  fit the model to an FX surface and two yield curves. Finally Section \ref{conclusions} summarizes our findings. Technical proofs are gathered in the Appendix.

\section{The model}\label{theModel}

We consider a foreign exchange market in which $N$ currencies are traded between
each other via standard FX spot and FX vanilla option transactions.  The value of each
of these currencies $i=1,..,N$ in units of a universal num\'{e}raire is denoted by  $S^{0,i}(t)$ (note that
$S^{0,i}(t)$ can itself be thought as an exchange rate, between the currency $i$ and the artificial currency $i=0$: we will see that the results are independent of the specification of the universal num�raire).

We assume the existence of $N$ money-market accounts (one for each monetary area), whose values are driven by locally deterministic ODE's of the type:
\begin{align}
dB^i(t)=&r^i(t)B^i(t)dt,\quad \quad i=1,..,N;
\end{align}
and we denote with $r^0$ the interest rate corresponding to the artificial currency.

We assume the existence of a common multivariate stochastic factor $\Sigma$ driving both the interest rates and the volatilities of the exchange rates $S^{0,i}(t)$. We model the stochastic factor  $\Sigma$ as a matrix Wishart process (see \cite{article_Bru}) evolving as 
\begin{align}
   d\Sigma(t)&=(\Omega\Omega^{\top}+M\Sigma(t)+\Sigma(t)M^{\top})dt+\sqrt{\Sigma(t)}dW(t)Q+Q^{\top}dW(t)^{\top}.\label{dyn_wis}
\end{align}
We assume $\Omega,M,Q\in M_{d}$, the set of $d\times d$ real matrices, $W=\left(W_{t}\right)_{t\geq0}\in M_d$ is a matrix Brownian motion (i.e. a $d\times d$ matrix whose components are independent Brownian motions).  The dimension $d$ can be chosen according to the specific problem and may reflect a PCA-type analysis. In order to ensure the typical mean reverting behavior of the process we assume that $M$ is negative semi-definite, moreover the matrix $\Omega$ satisfies the condition $\Omega\Omega^{\top}=\beta Q^{\top}Q$, for $\beta\geq d+1$. This last condition ensures the existence of a unique strong solution to the SDE \eqref{dyn_wis}, according to Corollary 3.2 in \cite{article_MPS}.

We model each of the $S^{0,i}(t)$ via a multifactor Wishart
stochastic volatility model. Formally, we write
\begin{align}
    \frac{dS^{0,i}(t)}{S^{0,i}(t)}  &= (r^0(t)-r^i(t)) dt-Tr\left[A_i\sqrt{\Sigma(t)}dZ(t)\right],\quad \ &i=1,\dots,N; \label{rates2}
\end{align}
where $Z_t\in M_d$ denotes a matrix Brownian motion.

 The diffusion term exhibits a structure that is completely analogous to the one introduced in \cite{article_heplaten06} and \cite{gnoatto11}: in the present case we have that the dynamics of the exchange rate is driven by a linear projection of the variance factor $\sqrt{\Sigma(t)}$ along a direction parametrized by the symmetric matrix $A_i$. As a consequence the total instantaneous variance is $Tr\left[A_i\Sigma(t)A_i\right]dt$.

The stochastic factor $\Sigma$ drives also the short interest rates:
\begin{align}
r^0&=h^0+Tr\left[H^0\Sigma(t)\right]\\
r^i&=h^i+Tr\left[H^i\Sigma(t)\right],\label{shortRates}
\end{align}
for $h^k>0$, $H^k\in S_d^+$, $k=0,..,N$.

We assume a correlation structure between the two matrix Brownian motions $Z(t)$ and $W(t)$, by means of an invertible matrix $R$ according to the following relationship:
\begin{align}
W(t)=&Z(t)R^\top +B(t)\sqrt{I_d-RR^\top},\label{corrStruct}
\end{align}
where $B(t)$ is a matrix Brownian motion independent of $Z(t)$.
We denote by $S^{i,j}(t)$, $i,j=1,..,N$ the exchange rate between currency $i$ and $j$. By Ito's lemma we have that $S^{i,j}(t)= S^{0,j}(t)/S^{0,i}(t)$ has the following dynamics
\begin{align}
\frac{dS^{i,j}(t)}{S^{i,j}(t)}&=(r^i(t)-r^j(t))dt+Tr[(A_i -A_j)\Sigma(t)A_i]dt\nonumber\\
&+Tr[(A_i -A_j)\sqrt{\Sigma(t)}dZ(t)].\label{Wrates}
\end{align}
The additional drift term in (\ref{Wrates}) can be understood as a quanto adjustment between the currencies $i$ and $j$.\\

 We try to provide some intuition concerning the flexibility of the approach by considering an introductory example.

\begin{example}
Let $d=4$. Consider the case
\begin{align*}
r^i&=h^i+Tr\left[\left(
\begin{array}{cccc}
1&0&0&0\\
0&0&0&0\\
0&0&0&0\\
0&0&0&0	
\end{array}
\right)\Sigma(t)\right]=h^i+\Sigma_{11}(t)\\
r^j&=h^j+Tr\left[\left(
\begin{array}{cccc}
0&0&0&0\\
0&1&0&0\\
0&0&0&0\\
0&0&0&0	
\end{array}
\right)\Sigma(t)\right]=h^j+\Sigma_{22}(t)\\
\end{align*}
Moreover we let the matrices $A^i,A^j$ be partitioned as follows
\begin{align*}
A^i=\left(
\begin{array}{cccc}
0&0&0&0\\
0&0&0&0\\
0&0&A^i_{11}&A^i_{12}\\
0&0&A^i_{12}&A^i_{22}	
\end{array}
\right)	
\end{align*}

The idea is that the first two diagonal elements of the Wishart process are mainly responsible for the dynamics of the short-rates whereas the third and the fourth drive the volatilities. The elements of the Wishart process are related among each other in a non-trivial way, in fact
\begin{align}
d\left\langle \Sigma_{ij},\Sigma_{jk}\right\rangle_t=\left(\Sigma_{ik}\left(Q^\top Q\right)_{jl}+\Sigma_{il}\left(Q^\top Q\right)_{jk}+\Sigma_{jl}\left(Q^\top Q\right)_{ik}\right)dt.
\end{align}
Consequently, in the present four-dimensional hybrid FX-short-rate model, the correlation between e.g. the element $\Sigma_{11}$ (which is mainly responsible for the dynamics of the first short rate) and $\Sigma_{33}$ (which drives the volatility of the FX rate) is stochastic and given by
\begin{align*}
d\left\langle \Sigma_{11},\Sigma_{33}\right\rangle_t=4\Sigma_{13}\left(Q^\top Q\right)_{13}dt.
\end{align*}

From the discussion above, we realize that we are in presence of a hybrid model, allowing for non trivial links among interest rates and FX rates, while preserving full analytical tractability, as we will see in the sequel. This interesting feature, to the best of our knowledge, is not shared by other existing hybrid models, that usually rely on approximations of the solution of the Kolmogorov PDE involving non-affine terms (see e.g. \cite{oosterlee11}, \cite{oosterlee12}).
\end{example}

\section{Risk neutral probability measures}\label{rn_meas}
Up to now we have worked in the risk neutral measure defined by our artificial currency.
In practical pricing applications, it is more convenient to change the num\'{e}raire to any of the currencies included in our
FX multi-dimensional system. Without loss of generality, let us consider the risk neutral measure defined by the $i$-th
money market account $B^i(t)$ and derive the dynamical equations for the standard FX rate $S^{i,j}(t)$, its inverse $S^{j,i}(t)$, and a generic cross $S^{j,l}(t)$.

The Girsanov change of measure that transfers to the $\mathbb{Q}^i$ risk neutral measure (i.e. the risk neutral measure in the $i$-th country) is simply determined by assuming that under $\mathbb{Q}^i$ the drift of the exchange rate $S^{i,j}(t)$ is given by $r^i-r^j$ (or equivalently by the fact that the money market account $B^f(t)$ is a $\mathbb{Q}^i$-martingale once discounted by the interest rate $r^i(t)$). The associated Radon-Nikodym derivative is given by
\begin{align*}
\left.\frac{d\mathbb{Q}^i}{d\mathbb{Q}^0}\right|_{\mathcal{F}_t}=& \exp\left(-\int_0^t Tr[A_i\sqrt{\Sigma(s)}dZ(s)]-\frac{1}{2}\int_0^t Tr[A_i\Sigma(s)A_i] ds\right).
\end{align*}
In \cite{Mayer2012}, conditions ensuring that the stochastic exponential above is a true martingale are provided.
In the following, we proceed along the lines of \cite{gnoatto11}. The possibility of buying the foreign currency and investing it at the foreign short rate of interest, is equivalent to the possibility of investing in a domestic asset with price process $\tilde{B}^i_j=B^jS^{i,j}$, where $i$ is the domestic economy and $j$ is the foreign one. Then
\begin{align*}
d\tilde{B}_{j}^{i}(t)&=d\left(B^{j}(t)S^{i,j}(t)\right)\nonumber\\
&=B^{j}(t)S^{i,j}(t)\left((r^i(t)-r^j(t))dt+Tr[(A_i -A_j)\Sigma(t)A_i]dt\right.\nonumber\\
&\left.+Tr[(A_i -A_j)\sqrt{\Sigma(t)}dZ(t)]\right)+B^{j}S^{i,j}(t)r^{j}(t)dt\nonumber\\
&=\tilde{B}_{j}^{i}(t)\left(r^i(t)dt+Tr[(A_i -A_j)\Sigma(t)A_i]dt+Tr[(A_i -A_j)\sqrt{\Sigma(t)}dZ(t)]\right)\nonumber\\
&=\tilde{B}_{j}^{i}(t)\left(r^{i}(t)dt+Tr[(A_i -A_j)\sqrt{\Sigma(t)}dZ^{\mathbb{Q}^i}(t)]\right),
\end{align*}
where the matrix brownian motion under $\mathbb{Q}^i$ is given by
\begin{align*}
dZ^{\mathbb{Q}^i}=dZ+\sqrt{\Sigma(t)}A_idt,
\end{align*}
then the $\mathbb{Q}^i$-risk neutral dynamics of the exchange rate is of the form
\begin{align*}
dS^{i,j}(t)&=d\left(\frac{\tilde{B}_{j}^{i}(t)}{B_{j}(t)}\right)\nonumber\\
&=S^{i,j}(t)\left((r^{i}(t)-r^{j}(t))dt+Tr\left[(A_i -A_j)\sqrt{\Sigma(t)}dZ^{\mathbb{Q}^i}(t)\right]\right).
\end{align*}
The measure change has however also an impact on the variance processes, via the correlation matrix $R$ introduced in \eqref{corrStruct}.
The component of $dB(t)$ that is orthogonal to the spot driver $dZ(t)$ is not affected by the measure
change; this is a natural choice that is consistent with the foreign-domestic symmetry\footnote{This assumption is in line with the procedure that has been introduced in \cite{gnoatto11} and implies that the model is consistent with the foreign-domestic parity as in \cite{note_dbr}.}.
 We are now able to derive the risk neutral dynamics of the factor process $\Sigma(t)$ governing the volatility of the exchange rates under $\mathbb{Q}^i$, that is given by
\begin{equation}
dW^{\mathbb{Q}^i}(t)=\left(dZ(t)+\sqrt{\Sigma}A_idt\right)R^{\top}+dB(t)\sqrt{I_d-RR^{\top}}.\label{bm_Qi}
\end{equation}
From \eqref{dyn_wis} and \eqref{corrStruct} we derive the $\mathbb{Q}^i$-risk neutral dynamics of $\Sigma$ as follows:
\begin{align*}
d\Sigma(t)=&(\Omega\Omega^{\top}+M\Sigma(t)+\Sigma(t)M^{\top})dt\nonumber\\
&+\sqrt{\Sigma(t)}\left(\left(dZ(t)+\sqrt{\Sigma(t)}A_idt\right)R^{\top}+dB(t)\sqrt{I_d-RR^{\top}}\right)Q\nonumber\\
&+Q^{\top}\left(R\left(dZ^{\top}_t+A_i\sqrt{\Sigma(t)}dt\right)+\sqrt{I_d-RR^{\top}}^{\top}dB^{\top}\right)\sqrt{\Sigma(t)}\nonumber\\
&-\Sigma(t)A_iR^{\top}Qdt-Q^{\top}RA_i\Sigma(t)dt.\nonumber
\end{align*}
Now define
\begin{align*}
M^{\mathbb{Q}^i}:=M-Q^{\top}RA_i,
\end{align*}
so that using \eqref{bm_Qi} we can finally write
\begin{align*}
d\Sigma(t)&=(\Omega\Omega^{\top}+M^{\mathbb{Q}^i}\Sigma(t)+\Sigma(t)M^{\mathbb{Q}^i,\top})dt\nonumber\\
&\sqrt{\Sigma(t)}dW^{\mathbb{Q}^i}(t)Q+Q^{\top}dW^{\mathbb{Q}^i,\top}(t)\sqrt{\Sigma(t)},
\end{align*}
from which we deduce the relations among the parameters:
\begin{align}
R^{\mathbb{Q}^i}&=R,\nonumber\\
Q^{\mathbb{Q}^i}&=Q,\\
M^{\mathbb{Q}^i}&=M-Q^{\top}RA_i\nonumber.
\end{align}
We observe that, like in the multi-Heston case of \cite{gnoatto11}, the functional form of the model is invariant under the measure change between $\mathbb{Q}^0$ and the $i$th-risk neutral measure. The inverse FX rate under the $\mathbb{Q}^i$-risk neutral measure follows from Ito calculus, recalling that $S^{j,i}=\left(S^{i,j}\right)^{-1}$:
\begin{align*}
    \frac{dS^{j,i}(t)}{S^{j,i}(t)} & = S^{i,j}(t) d\left(\frac{1}{S^{i,j}(t)}\right)  \\
    & =\left(r^j(t)-r^i(t) + Tr\left[(A_j - A_i)\Sigma(t)(A_j - A_i)\right]\right)dt\nonumber\\
    &+Tr\left[(A_j-A_i)\sqrt{\Sigma(t)}dZ^{\mathbb{Q}^i}(t)\right],\nonumber
\end{align*}
which includes the self-quanto adjustment. Similarly, the SDE of a generic cross FX rate becomes
\begin{align*}
    \frac{dS^{j,l}(t)}{S^{j,l}(t)}= &  \frac{S^{i,j}(t)}{S^{i,l}(t)}\,d\left(\frac{S^{i,l}(t)}{S^{i,j}(t)}\right)   \\
     =&  \left(r^j(t)-r^l(t) + Tr\left[(A_j- A_l)\Sigma(t)(A_j - A_i)\right]\right)dt\nonumber\\
    &+Tr\left[(A_j-A_l)\sqrt{\Sigma(t)}dZ^{\mathbb{Q}^i}(t)\right].\nonumber
\end{align*}
The additional drift term is the quanto adjustment as described by the current model choice. By applying
Girsanov's theorem again, this time switching to the $\mathbb{Q}^j$ risk neutral measure, the term is removed
while the Wishart parameters change according to the following fundamental transformation rules:

\begin{align}
    R^{\mathbb{Q}^j}&=R^{\mathbb{Q}^i},\nonumber\\
    Q^{\mathbb{Q}^j}&=Q^{\mathbb{Q}^i},\\
    M^{\mathbb{Q}^j}&=M^{\mathbb{Q}^i}-Q^{\mathbb{Q}^i,\top}R^{\mathbb{Q}^i}(A_j-A_i)\nonumber.
\end{align}

\section{Features of the model}\label{features}

\subsection{Functional symmetry of the model}

Recall that a crucial property of the FX market requires that products or ratios of exchange rates in a triangle are also exchange rates, meaning that the dynamics of the exchange rates must be functional symmetric with respect to which FX pairs
we choose to be the main ones and which one the cross. That is, it is not
a priori trivial to obtain a model such that the dynamics for the inverse of an exchange rate shares the same functional form in its coefficients. Or equivalently, the dynamics of $(S^{i,l}S^{l,j})$ computed
by applying
the Ito's rule to the product $(S^{0,l}/S^{0,i})\times (S^{0,j}/S^{0,l})$, must give the dynamics
of a process that shares the same functional form of both $S^{i,l}$ and $S^{l,j}$. This
symmetry property is fundamental in order to be able to joint calibrate and
price consistently multi currency options (see e.g. \cite{gnoatto11}).

\begin{proposition}\label{prop_6}
The dynamics of the exchange rates (\ref{Wrates}) satisfies the triangular relation, namely the model is functional symmetric.
\end{proposition}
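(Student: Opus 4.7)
The plan is to verify that, starting from the $\mathbb{Q}^0$-dynamics in equation~(\ref{Wrates}), the product $S^{i,l}(t)\,S^{l,j}(t)$ has exactly the same functional form as $S^{i,j}(t)$. Since by construction $S^{i,l}S^{l,j}=(S^{0,l}/S^{0,i})(S^{0,j}/S^{0,l})=S^{0,j}/S^{0,i}=S^{i,j}$, the two SDEs must describe the same process; but the nontrivial content of the statement is that applying Ito directly to the product reproduces the parametric form of (\ref{Wrates}) without introducing any extra terms that depend on the intermediate currency $l$.

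First I would write the logarithmic differentials of $S^{i,l}$ and $S^{l,j}$ from~(\ref{Wrates}) and apply Ito's product rule
\begin{align*}
\frac{d(S^{i,l}S^{l,j})}{S^{i,l}S^{l,j}}=\frac{dS^{i,l}}{S^{i,l}}+\frac{dS^{l,j}}{S^{l,j}}+\frac{d\langle S^{i,l},S^{l,j}\rangle}{S^{i,l}S^{l,j}}.
\end{align*}
The interest-rate differentials telescope as $(r^i-r^l)+(r^l-r^j)=r^i-r^j$, and the diffusion terms combine linearly through the trace, giving $Tr[(A_i-A_j)\sqrt{\Sigma}\,dZ]$, which already matches the diffusion coefficient in~(\ref{Wrates}) written for $S^{i,j}$.

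The only delicate step is the drift: I have to show that the sum of the two quanto adjustments and the cross quadratic covariation collapses to $Tr[(A_i-A_j)\Sigma A_i]$. For this I would use the general identity $d\langle Tr[C\sqrt{\Sigma}\,dZ],\,Tr[D\sqrt{\Sigma}\,dZ]\rangle=Tr[C\Sigma D^{\top}]\,dt$ for matrix-Brownian integrals, yielding
\begin{align*}
d\langle S^{i,l},S^{l,j}\rangle / (S^{i,l}S^{l,j}) = Tr[(A_i-A_l)\Sigma(A_l-A_j)]\,dt,
\end{align*}
because the $A_k$ are symmetric. Adding this to $Tr[(A_i-A_l)\Sigma A_i]+Tr[(A_l-A_j)\Sigma A_l]$ and repeatedly using the trace identity $Tr[A\Sigma B]=Tr[B\Sigma A]$ (valid for symmetric $A,B,\Sigma$) all the $A_l$-terms cancel, leaving precisely $Tr[(A_i-A_j)\Sigma A_i]$.

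The main obstacle is therefore purely algebraic: the correct bookkeeping of the trace/symmetry manipulations to cancel the $A_l$ contributions. The remaining ingredients (Ito product rule, interest-rate telescoping, linearity of the trace in the diffusion) are routine. Once the drift identity is verified, one concludes that $d(S^{i,l}S^{l,j})/(S^{i,l}S^{l,j})$ has exactly the form predicted by~(\ref{Wrates}) for $S^{i,j}$, thereby establishing the functional symmetry of the model.
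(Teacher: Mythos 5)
Your proposal is correct and follows essentially the same route as the paper's own proof: apply Ito's product rule to $S^{i,l}S^{l,j}$, use the covariation identity $d\langle Tr[C\sqrt{\Sigma}\,dZ],Tr[D\sqrt{\Sigma}\,dZ]\rangle=Tr[C\Sigma D^{\top}]\,dt$ (which the paper derives by brute-force index manipulation rather than citing it directly), and then cancel the $A_l$-dependent drift terms via symmetry of the $A_k$ and $\Sigma$ together with cyclicity of the trace. The only cosmetic difference is that you invoke the matrix-Brownian covariation identity up front, whereas the paper reproves it in components; the algebraic cancellation step is identical.
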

\begin{proof} See the Appendix.
\end{proof}

\subsection{Stochastic Skew}
In analogy with \cite{gnoatto11}, if we calculate
the infinitesimal correlation between the log returns of $S^{i,j}$ and their variance
$Vol^2(S^{i,j} )$, we find that it is stochastic. This is a nice feature of the model since it implies that the skewness of vanilla options on $S^{i,j}$ is stochastic, which is a well known stylized fact in the FX market (see e.g. \cite{carrwu07}). 
%
%
%

Let us consider the infinitesimal variance of $S^{i,j}$:
\begin{equation*}
d\left\langle S^{i,j},S^{i,j}\right\rangle_t=Tr[(A_i-A_j)\Sigma(t)(A_i-A_j)]dt,
\end{equation*}
so that we can write 
\begin{align}
\frac{dS^{i,j}(t)}{S^{i,j}(t)}&=(r^{i}-r^{j})dt+Tr\left[(A_i -A_j)\sqrt{\Sigma(t)}dZ^{\mathbb{Q}^i}(t)\right]\nonumber\\
&=(r^{i}-r^{j})dt+\sqrt{Tr\left[(A_i -A_j)\Sigma(t)(A_i -A_j)\right]}dB_1(t),\label{weak}
\end{align}
where we defined
\begin{align*}
dB_1(t):=\frac{Tr\left[(A_i -A_j)\sqrt{\Sigma(t)}dZ^{\mathbb{Q}^i}(t)\right]}{\sqrt{Tr\left[(A_i -A_j)\Sigma(t)(A_i -A_j)\right]}}.
\end{align*}
The process $B_1=\left(B_1(t)\right)_{t\geq 0}$ is a scalar local martingale with quadratic variation given by $t$, hence an application of L\'evy characterization theorem allows us to claim that $B_1$ is a  Brownian motion and we still denote by $S^{i,j}$ the weak solution to the SDE \eqref{weak} .

The dynamics of the variance is given by:
\begin{align*}
&dTr\left[(A_i -A_j)\Sigma(t)(A_i -A_j)\right]\nonumber\\
&=\left(Tr\left[(A_i -A_j)\Omega\Omega^{\top}(A_i -A_j)\right]+2Tr\left[(A_i -A_j)M\Sigma(t)(A_i -A_j)\right]\right)dt\nonumber\\
&+2Tr\left[(A_i -A_j)\sqrt{\Sigma}dW^{\mathbb{Q}^i}(t)Q(A_i -A_j)\right].
\end{align*}
In order to determine the scalar Brownian motion driving the variance process we shall compute the following quadratic variation:
\begin{align*}
&d\left\langle Tr\left[(A_i -A_j)\Sigma(A_i -A_j)\right]\right\rangle_t\nonumber\\
&=4d\left\langle\int_0^. Tr\left[(A_i -A_j)\sqrt{\Sigma(v)}dW^{\mathbb{Q}^i}(v)Q(A_i -A_j)\right],\right.\nonumber\\
&\left. \int_0^. Tr\left[(A_i -A_j)\sqrt{\Sigma(u)}dW^{\mathbb{Q}^i}(u)Q(A_i -A_j)\right] \right\rangle_t\nonumber\\
&=4d\left\langle \int_0^.\sum_{a,b,c,d,e=1}^{d}{(A_i -A_j)_{ab}\sqrt{\Sigma(v)}_{bc}dW_{cd}^{\mathbb{Q}^i}(v)Q_{de}(A_i -A_j)_{ea}},\right.\nonumber\\
&\left.\int_0^.\sum_{p,q,r,s,l=1}^{d}{(A_i -A_j)_{pq}\sqrt{\Sigma(u)}_{qr}dW_{rs}^{\mathbb{Q}^i}(u)Q_{sl}(A_i -A_j)_{lp}}\right\rangle_t\nonumber\\
&=\sum_{a,b,e,p,q,r,s,t=1}^{d}(A_i -A_j)_{ab}\sqrt{\Sigma(t)}_{br}Q_{se}(A_i -A_j)_{ea}\nonumber\\
&\times(A_i -A_j)_{pq}\sqrt{\Sigma(t)}_{qr}Q_{sl}(A_i -A_j)_{lp}dt\nonumber\\
&=4Tr\left[(A_i -A_j)^2\Sigma(t)(A_i -A_j)^2Q^{\top}Q\right]dt.
\end{align*}
Then we can use the same arguments as before and express the dynamics of the variance as follows:
\begin{align}
&dTr\left[(A_i -A_j)\Sigma(t)(A_i -A_j)\right]\nonumber\\
&=\left(...\right)dt+2\sqrt{Tr\left[(A_i -A_j)^2\Sigma(t)(A_i -A_j)^2Q^{\top}Q\right]}dB_2(t),\nonumber
\end{align}
where
\begin{align*}
dB_2(t):=\frac{Tr\left[(A_i -A_j)\sqrt{\Sigma}dW^{\mathbb{Q}^i}(t)Q(A_i -A_j)\right]}{\sqrt{Tr\left[(A_i -A_j)^2\Sigma(t)(A_i -A_j)^2Q^{\top}Q\right]}},
\end{align*}
which allows us to compute the covariation between the two noises. Notice that in the calculation above we are assuming the invariance of the correlation with respect to the change of measure that was explained in Section \ref{rn_meas}. The skewness is then related to the quadratic covariation between the two noises $B_1,B_2$: 
\begin{align}
d\left\langle B_1,B_2 \right\rangle_t
&=\frac{d\left\langle \int_0^. Tr\left[(A_i -A_j)\sqrt{\Sigma(t)}dZ_{t}^{\mathbb{Q}^i}\right],
\int_0^. Tr\left[(A_i -A_j)\sqrt{\Sigma(t)}dW^{\mathbb{Q}^i}(t)Q(A_i -A_j)\right]\right\rangle_t}{\sqrt{Tr\left[(A_i -A_j)\Sigma(t)(A_i -A_j)\right]}\sqrt{Tr\left[(A_i -A_j)^2\Sigma(t)(A_i -A_j)^2Q^{\top}Q\right]}} \nonumber\\
&=\frac{d\left\langle \int_0^.Tr\left[(A_i -A_j)\sqrt{\Sigma(t)}dZ^{\mathbb{Q}^i}(t)\right],
\int_0^.Tr\left[(A_i -A_j)\sqrt{\Sigma(t)}dZ^{\mathbb{Q}^i}(t)R^{\top}Q(A_i -A_j)\right]\right\rangle_t}
{\sqrt{Tr\left[(A_i -A_j)\Sigma(t)(A_i -A_j)\right]}\sqrt{Tr\left[(A_i -A_j)^2\Sigma(t)(A_i -A_j)^2Q^{\top}Q\right]}}\nonumber\\
&=\frac{\sum_{a,b,c,d,e,f,p,q,r=1}^{d}{(A_i -A_j)_{pq}\sqrt{\Sigma(t)}_{qr}dZ_{rp}^{\mathbb{Q}^i}(t)(A_i -A_j)_{ab}\sqrt{\Sigma(t)}_{bc}dZ_{cd}^{\mathbb{Q}^i}(t)R_{de}^{\top}Q_{ef}(A_i -A_j)_{fa}}}{\sqrt{Tr\left[(A_i -A_j)\Sigma(t)(A_i -A_j)\right]}\sqrt{Tr\left[(A_i -A_j)^2\Sigma(t)(A_i -A_j)^2Q^{\top}Q\right]}}\nonumber\\
&=\frac{\sum_{a,b,c,e,f,q=1}^{d}{(A_i -A_j)_{dq}\sqrt{\Sigma(t)}_{qc}\sqrt{\Sigma(t)}_{cb}(A_i -A_j)_{ba}(A_i -A_j)_{af}Q_{fe}^{\top}R_{ed}}}{\sqrt{Tr\left[(A_i -A_j)\Sigma(t)(A_i -A_j)\right]}\sqrt{Tr\left[(A_i -A_j)^2\Sigma(t)(A_i -A_j)^2Q^{\top}Q\right]}}dt\nonumber\\
&=\frac{Tr\left[(A_i -A_j)\Sigma(t)(A_i -A_j)^{2}Q^{\top}R\right]}{\sqrt{Tr\left[(A_i -A_j)\Sigma(t)(A_i -A_j)\right]}\sqrt{Tr\left[(A_i -A_j)^2\Sigma(t)(A_i -A_j)^2Q^{\top}Q\right]}}dt.
\end{align}
This quantity is proportional to the skew: in particular, by looking at the numerator we realize the proportionality in the asymptotic expansion of the Proposition \ref{exp_Wis1}.

\subsection{A stochastic variance-covariance matrix}
We would like to discuss the positive definiteness of the variance-covariance matrix. For simplicity, we consider the case of three currencies, meaning that we will have a $2\times 2$ candidate covariance matrix:
\begin{align}
\left(\begin{array}{cc}
\left\langle \ln S^{i,j} \right\rangle_t&\left\langle \ln S^{i,j}, \ln S^{i,l}\right\rangle_t\\
\left\langle \ln S^{i,j},\ln S^{i,l}\right\rangle_t&\left\langle \ln S^{i,l} \right\rangle_t 	
\end{array}
\right).
\end{align}
We know that
\begin{align}
d\left\langle \ln S^{i,j} \right\rangle_t&=Tr\left[\left(A_i-A_j\right)\Sigma(t)\left(A_i-A_j\right)\right]dt,\label{variance2}\\
d\left\langle \ln  S^{i,j},lnS^{i,l}\right\rangle_t&=Tr\left[\left(A_i-A_j\right)\Sigma(t)\left(A_i-A_l\right)\right]dt.
\end{align}
We first look at \eqref{variance2}. We recall that we assumed $A_i,A_j,A_l\in S_d$. Without loss of generality (otherwise put $V=-V'$ for $V'\in S_d^+$), let $\left(A_i-A_j\right)\in S_d^+$. Recall that the cone $S_d^+$ is self dual, meaning that:
\begin{align*}
S_d^+=\left\{u \in S_d\quad|\quad Tr\left[uv\right]\geq0, \forall v\in S_d^+\right\}.
\end{align*}
Let $O$ be an orthogonal matrix, then we may write: $\left(A_i-A_j\right)=O\Lambda O^{\top}$, where $\Lambda$ is a diagonal matrix containing the eigenvalues of $\left(A_i-A_j\right)$ on the main diagonal. Then we have:
\begin{align*}
Tr\left[\left(A_i-A_j\right)\Sigma(t)\left(A_i-A_j\right)\right]&=Tr\left[O\Lambda O^{\top}\Sigma(t) O\Lambda O^{\top}\right]\nonumber\\
&=Tr\left[\Sigma(t) O\Lambda^2 O^{\top}\right]\geq 0
\end{align*}
by self-duality. This shows that variances are positive. Now we would like to check that the variance-covariance matrix is in $S_d^+$.
Now let
\begin{align*}
\mathcal{M}(t)&=\left(A_i-A_j\right)\sqrt{\Sigma(t)},\\
\mathcal{N}(t)&=\left(A_i-A_l\right)\sqrt{\Sigma(t)},
\end{align*}
then, using the Cauchy-Schwarz inequality for matrices we have
\begin{align*}
&Tr\left[\left(A_i-A_j\right)\Sigma(t)\left(A_i-A_j\right)\right]Tr\left[\left(A_i-A_l\right)\Sigma(t)\left(A_i-A_l\right)\right]\\
&=Tr\left[\mathcal{M}(t)\mathcal{M}^{\top}(t)\right]Tr\left[\mathcal{N}(t)\mathcal{N}^{\top}(t)\right]\\
&\geq Tr\left[\mathcal{M}(t)\mathcal{N}^{\top}(t)\right]^2\\
&=Tr\left[\left(A_i-A_j\right)\Sigma(t)\left(A_i-A_l\right)\right]^2.
\end{align*}
This implies that the determinant of the instantaneous variance-covariance matrix is positive, so we conclude that the variance-covariance matrix is well defined, and, as a side effect, we have the usual bound for the correlations, i.e. all correlations are bounded by one (in absolute value).

\subsection{Stochastic correlation between short rates and FX rates}

We now show that our model allows for non trivial dependencies between the short rates and the exchange rates.

\begin{proposition}
The instantaneous correlation between a short rate $r^i$ and the log-FX rate $\log S^{ij}(t)$ is stochastic and given by
\begin{align}
\rho_{r^i,\log S^{i,j}(t)}=\frac{Tr\left[\left(A_i-A_j\right)\Sigma(t)HQ^\top R\right]}{\sqrt{Tr\left[\left(A_i-A_j\right)\Sigma(t)\left(A_i-A_j\right)\right]}\sqrt{Tr\left[QH^i\Sigma(t)H^iQ^\top\right]}}.
\end{align}
\end{proposition}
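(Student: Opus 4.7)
The plan is a direct application of It\^o's formula combined with the correlation structure \eqref{corrStruct}. By definition,
\begin{align*}
\rho_{r^i,\log S^{i,j}(t)} = \frac{d\langle r^i,\log S^{i,j}\rangle_t}{\sqrt{d\langle r^i\rangle_t}\,\sqrt{d\langle \log S^{i,j}\rangle_t}},
\end{align*}
so three (co)variations need to be computed.

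First I would extract the diffusion part of $dr^i$. From $r^i=h^i+Tr[H^i\Sigma(t)]$ and \eqref{dyn_wis}, the diffusion of $\Sigma$ is $\sqrt{\Sigma(t)}dW(t)Q+Q^\top dW(t)^\top\sqrt{\Sigma(t)}$; by cyclicity of the trace together with symmetry of $H^i$ and $\sqrt{\Sigma(t)}$, the two contributions collapse to the single term $2\,Tr[QH^i\sqrt{\Sigma(t)}\,dW(t)]$. The diffusion of $\log S^{i,j}$ is read off \eqref{Wrates} as $Tr[(A_i-A_j)\sqrt{\Sigma(t)}\,dZ(t)]$.

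Second I would compute the two variances. Writing a scalar trace as $Tr[C\,dW]=\sum_{a,b}C_{ab}\,dW_{ba}$ and using the independence of the entries of $W$ gives the identity $d\langle Tr[C\,dW]\rangle_t=Tr[CC^\top]\,dt$, from which
\begin{align*}
d\langle r^i\rangle_t = 4\,Tr[QH^i\Sigma(t)H^iQ^\top]\,dt,
\end{align*}
while equation \eqref{variance2} already yields $d\langle \log S^{i,j}\rangle_t=Tr[(A_i-A_j)\Sigma(t)(A_i-A_j)]\,dt$.

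The main step is the cross variation. Injecting the decomposition $dW(t)=dZ(t)R^\top+dB(t)\sqrt{I_d-RR^\top}$, only the $dZ$ component couples to $\log S^{i,j}$ because $B$ is independent of $Z$. An index-by-index computation completely analogous to the one already displayed for $d\langle B_1,B_2\rangle_t$, followed by repeated use of the symmetries of $\Sigma$, $H^i$ and $A_i-A_j$ together with cyclicity of the trace, yields
\begin{align*}
d\langle r^i,\log S^{i,j}\rangle_t = 2\,Tr[(A_i-A_j)\Sigma(t)H^iQ^\top R]\,dt.
\end{align*}
Substituting the three expressions into the correlation formula, the factor $2$ in the numerator cancels against the $\sqrt{4}$ coming from $d\langle r^i\rangle_t$, and the claimed identity follows. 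I expect the main obstacle to be the bookkeeping of matrix indices in the cross variation: the transpose $R^\top$ has to be carried through carefully, and several applications of cyclicity together with the symmetry of $\Sigma$, $H^i$ and $A_i-A_j$ are required to recognise that the raw index sum collapses to the compact trace $Tr[(A_i-A_j)\Sigma(t)H^iQ^\top R]$ appearing in the statement.
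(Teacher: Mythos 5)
Your proposal is correct and follows essentially the same route as the paper's own proof: identify the three instantaneous (co)variations of $\log S^{i,j}$ and $r^i$, reduce the cross variation via the decomposition $dW = dZ\,R^\top + dB\sqrt{I_d - RR^\top}$ so that only the $dZ$-part survives, and then collapse the index sums to traces using cyclicity and the symmetry of $\Sigma$, $H^i$, and $A_i-A_j$. One small remark: you carry the constants explicitly (a factor $4$ in $d\langle r^i\rangle_t$ and a factor $2$ in $d\langle r^i,\log S^{i,j}\rangle_t$) and observe that they cancel, whereas the paper's displayed intermediate lines silently drop both factors; since $2/\sqrt{4}=1$, the two bookkeeping conventions land on the same final formula, and yours is the more transparent of the two.
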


\begin{proof}
Under the $\mathbb{Q}^i$-risk neutral measure, the covariation between the short rate and the log-FX rate is given by

\begin{align*}
d\left\langle \log S^{ij}, r^i\right\rangle_t
&=d\left\langle \int_0^.Tr\left[\left(A_i-A_j\right)\sqrt{\Sigma(u)}dZ(u)\right],\int_0^.2Tr\left[H\sqrt{\Sigma(u)}dZ(u)R^\top Q\right]\right\rangle_t\\
&=2d\left\langle\sum_{i,j,k=1}^{d}\int_0^.\left(A_i-A_j\right)_{ij}\sigma_{jk}(u)dZ_{ki}(u),\int_0^.\sum_{p,q,r,s,n=1}^{d}H_{pq}\sigma_{qr}(u)dZ_{rs}(u)R_{ns}Q_{np}\right\rangle_t\\
&=\left(A_i-A_j\right)_{sj}\sigma_{jr}\sigma_{rq}H_{qp}Q_{np}R_{ns}dt\\
&=Tr\left[\left(A_i-A_j\right)\Sigma(t)HQ^\top R\right]dt,
\end{align*}
while for the short rate we have
\begin{align}
d\left\langle r^i,r^i\right\rangle_t=Tr\left[QH^i\Sigma(t)H^iQ^\top\right]dt.
\end{align}

Given the instantaneous quadratic variation of the log-FX rate and the short rate we conclude.\end{proof}

\subsection{Stochastic correlation between short rates and the variance of the FX rates}

The richness of our model specification may be further appreciated when we look at the correlation between any of the short rates and the variance of the FX rates, which is not usually captured in the literature (see \cite{oosterlee12}).

\begin{proposition}
The instantaneous correlation between the short rate $r^i$ and the variance of the FX rate is stochastic and given by

\begin{align}
\rho_{r^i,Var^{i,j}}=\frac{Tr\left[\left(A_i-A_j\right)\Sigma(t)H^{i}Q^\top Q\left(A_i-A_j\right)\right]}{\sqrt{Tr\left[\left(A_i-A_j\right)^2\Sigma(t)\left(A_i-A_j\right)^2Q^\top Q\right]}\sqrt{Tr\left[QH^i\Sigma(t)H^{i}Q^\top\right]}}.
\end{align}
\end{proposition}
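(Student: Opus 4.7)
The argument directly parallels the preceding proposition. First, I observe that both the short rate $r^i(t)=h^i+Tr[H^i\Sigma(t)]$ and the instantaneous variance $Var^{i,j}(t)=Tr[(A_i-A_j)\Sigma(t)(A_i-A_j)]$ are affine functionals of the Wishart state $\Sigma$. Applying It\^o and reading off the $\mathbb{Q}^i$-martingale parts from \eqref{dyn_wis}, both processes acquire a martingale of the common form $2Tr[K\sqrt{\Sigma(t)}dW^{\mathbb{Q}^i}(t)Q]$, with $K=H^i$ for $r^i$ and $K=(A_i-A_j)^2$ for $Var^{i,j}$ (cyclicity of the trace absorbs the outer $(A_i-A_j)$ in the latter).

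For the denominator of the correlation I would invoke the quadratic variation $d\langle r^i,r^i\rangle_t=Tr[QH^i\Sigma(t)H^i Q^\top]dt$, already recorded in the preceding proposition, together with $d\langle Var^{i,j}\rangle_t=4Tr[(A_i-A_j)^2\Sigma(t)(A_i-A_j)^2 Q^\top Q]dt$, derived in the Stochastic Skew subsection. Only the numerator is new, namely the cross covariation of the two Wishart-driven martingales. I would compute it in index notation, writing
\begin{align*}
d\langle r^i,Var^{i,j}\rangle_t = 4\sum H^i_{ab}\sqrt{\Sigma}_{bc}Q_{da}(A_i-A_j)_{pq}\sqrt{\Sigma}_{qr}Q_{sn}(A_i-A_j)_{np}\,d\langle W_{cd},W_{rs}\rangle,
\end{align*}
then using $d\langle W_{cd},W_{rs}\rangle=\delta_{cr}\delta_{ds}\,dt$, contracting with $\sum_c\sqrt{\Sigma}_{bc}\sqrt{\Sigma}_{qc}=\Sigma_{bq}$ and $\sum_d Q_{da}Q_{dn}=(Q^\top Q)_{an}$, and finally exploiting the symmetry of $H^i$ and $A_i-A_j$ together with cyclicity of the trace to reassemble the surviving sum as the single trace displayed in the statement.

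A useful simplification is that both $r^i$ and $Var^{i,j}$ are functions of $\Sigma$ alone, so only the $dW^{\mathbb{Q}^i}$ part of the noise enters the cross covariation: the orthogonal matrix Brownian motion $B$ in \eqref{corrStruct} drops out and, in contrast with the preceding proposition, the correlation matrix $R$ is replaced in the final answer by the Gram matrix $Q^\top Q$. Inserting the three (co)variations into the definition $\rho_{r^i,Var^{i,j}}=d\langle r^i,Var^{i,j}\rangle_t/\sqrt{d\langle r^i\rangle_t\,d\langle Var^{i,j}\rangle_t}$ and cancelling the common numerical prefactors yields the stated formula. The only delicate step is the index bookkeeping, i.e.\ tracking how the four symmetric matrices $A_i-A_j$, $\Sigma$, $H^i$ and $Q^\top Q$ end up ordered inside the trace in the numerator; everything else is mechanical.
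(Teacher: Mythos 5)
Your proof is correct and takes essentially the same approach as the paper: identify the Wishart-driven martingale parts of $r^i$ and $Var^{i,j}$, compute the cross-covariation by index contraction using $d\langle W_{cd},W_{rs}\rangle=\delta_{cr}\delta_{ds}\,dt$, and divide by the already-derived quadratic variations, with the constant prefactors cancelling. Your use of trace cyclicity to write both martingale noises in the common form $2\,Tr\bigl[K\sqrt{\Sigma}\,dW^{\mathbb{Q}^i}Q\bigr]$ is a tidy reorganization but does not change the substance of the calculation.
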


\begin{proof}
Recall that the noise of the scalar instantaneous variance process is
\begin{align*}
dTr\left[\left(A_i-A_j\right)\Sigma(t)\left(A_i-A_j\right)\right]&=\left(...\right)dt+2Tr\left[\left(A_i-A_j\right)\sqrt{\Sigma(t)}dWQ\left(A_i-A_j\right)\right],
\end{align*}
and observe that the noise of the short rate is simply given by
\begin{align*}
dr^i(t)=\left(...\right)dt+2Tr\left[H^i\sqrt{\Sigma(t)}dW(t)Q\right],
\end{align*}
consequently
\begin{align*}
&d\left\langle r^i,Tr\left[\left(A_i-A_j\right)\Sigma\left(A_i-A_j\right)\right]\right\rangle_t\nonumber\\
&=d\left\langle \int_0^. 2Tr\left[H^i\sqrt{\Sigma(u)}dW(u)Q\right],\int_0^. 2Tr\left[\left(A_i-A_j\right)\sqrt{\Sigma(u)}dW(u)Q\left(A_i-A_j\right)\right]\right\rangle_t\nonumber\\
&=d\left\langle\int_0^.\sum_{i,j,k,l=1}^dH_{ij}\sigma_{jk}(u)dW_{kl}(u)Q_{li},\int_0^.\sum_{p,q,r,s,n=1}^d\left(A_i-A_j\right)_{pq}\sigma_{qr}(u)dW_{rs}(u)Q_{sn}\left(A_i-A_j\right)_{np}\right\rangle_t\nonumber\\
&=4\sum_{i,j,p,q,r,s,n=1}H_{ij}\sigma_{jr}(t)Q_{si}\left(A_i-A_j\right)_{pq}\sigma_{qr}(t)dW_{rs}(t)Q_{sn}\left(A_i-A_j\right)_{tp}dt\nonumber\\
&=4\sum_{i,j,p,q,r,s,n=1}\left(A_i-A_j\right)_{pq}\sigma_{qr}(t)\sigma_{rj}(t)H_{ji}Q^\top_{is}Q_{sn}\left(A_i-A_j\right)_{np}dt\nonumber\\
&=Tr\left[\left(A_i-A_j\right)\Sigma(t)HQ^\top Q\left(A_i-A_j\right)\right]dt.
\end{align*}

Combining the above results with the quadratic variation of the instantaneous variance and the quadratic variation of the short rate we conclude.
\end{proof}

\section{Pricing of derivatives}\label{pricing}

A distinctive feature of our model is the ability to price in closed form derivatives written on different underlyings, meaning that we can consider, in a unified approach, more markets simultaneously. In particular, we can consider jointly the FX and the fixed-income market. Basic European products like calls on FX rates and  interest rate products related to different monetary areas can be analyzed together in a single model. In principle, this feature allows the desk to be jointly fitted to different markets by means of a single model. In view of this, we provide fast pricing formulas in semi-closed form, up to Fourier integrals.

\subsection{European FX options}
We first provide the calculation of the discounted Laplace transform and the characteristic function of $x^{i,j}(t):=\ln S^{i,j}(t)$, that will be useful for option pricing purposes. 
Let us consider a call option $C(S^{i,j}(t),K^{i,j},\tau), i,j=1,..,N,i\not=j,$ on a generic FX rate $S^{i,j}(t) = \exp(x^{i,j}(t))$ with
strike $K^{i,j}$, maturity $T$ ($\tau = T - t$ is the time to maturity)  and face equal to one unit of the foreign currency.
For ease of notation set: $R^{\mathbb{Q}^i}=R$ and $Q^{\mathbb{Q}^i}=Q$ and we will use the shorthand $M^{\mathbb{Q}^i}=\tilde{M}$. We proceed to prove the following:
Being an affine model, the discounted characteristic function conditioned
on the initial values
\begin{align}
    \phi^{i,j}(\omega,t, \tau, x ,\Sigma) =  \mathbb{E}^{\mathbb{Q}^i}_t[e^{-\int_t^t r^i_sds}e^{\mathtt{i}\omega x^{i,j}(T)}|x^{i,j}(t) = x,\Sigma(t) = \Sigma  ]
\end{align}
can be derived analytically (here $\mathtt{i}=\sqrt{-1}$).  Standard numerical integration methods can then be used to invert
the Fourier transform to obtain the probability density at $T$ or the vanilla price via integration
against the payoff, with overall little computational effort. In fact, from the usual risk-neutral argument, the initial price of the call option can be written as (domestic) risk neutral expected value:
\begin{equation*}
C(S^{i,j}(t),K^{i,j},\tau)=\mathbb{E}^{\mathbb{Q}^i}_t\left[e^{-\int_t^Tr^i_s ds} \left( e^{x^{i,j}(T)}-K^{i,j}\right)^+\right] ,
\end{equation*}%
and by applying standard arguments (see e.g. \cite{article_Carr99} and also \cite{bak00}, \cite{DPS} and \cite{sepp03}) it can be expressed in terms of the integral of the product of the Fourier transform of the payoff and the discounted characteristic function of the log-asset price:
\begin{align}
C(S^{i,j}(t),K^{i,j},\tau)& =\frac{1}{ 2\pi }\int_{\mathcal{Z}}\phi^{i,j}(-\mathtt{i}\lambda,t, \tau, x ,\Sigma){\Phi}(\lambda)d\lambda,  \label{price2}
\end{align}%
where 
\begin{equation*}
\Phi(\lambda)=\int_{\mathcal{Z}}e^{\mathtt{i} \lambda x}\left( e^{x}-K^{i,j}\right)^+dx
\end{equation*}
is the Fourier transform of the payoff function and $\mathcal{Z}$ denotes the strip of regularity of the payoff, that is the admissible domain where the integral in (\ref{price2}) is well defined. In other words, the pricing problem is essentially solved once the (conditional) discounted characteristic function of the log-exchange rate is known. We recall the relationship between the characteristic function and the moment generating function. If we denote via $G^{i,j}(\omega,t, \tau, x ,\Sigma)$ the discounted moment generating function, given by
\begin{align}
    G^{i,j}(\omega,t, \tau, x ,\Sigma) =  \mathbb{E}^{\mathbb{Q}^i}_t[e^{-\int_t^Tr^i_s ds}e^{\omega x^{i,j}(T)}|x^{i,j}(t) = x,\Sigma(t) = \Sigma  ],
\end{align}
we simply have $\phi^{i,j}(\omega,t, \tau, x ,\mathbf{V})=G^{i,j}(\mathtt{i}\omega,t, \tau, x ,\Sigma)$.
As a consequence, in our affine model it is sufficient to derive the discounted Laplace transform, which is explicitly given by the following proposition.

\begin{proposition}\label{LT_FW}Assume that the matrices $A_i,A_j$ are symmetric. Then, in the Wishart model, the discounted Laplace Transform of $x^{i,j}(t):=\log S^{i,j}(t)$ is given by:
\begin{align}
G^{i,j}\left(\omega,t,T,x,\Sigma\right)=exp\left[\omega x+\mathcal{A}(\tau)+Tr\left[\mathcal{B}(\tau)\Sigma\right]\right],\label{LT0_W}
\end{align}
where:
\begin{align}
\mathcal{A}&=\left(\omega\left(h^i-h^j\right)-h^i\right)\tau-\frac{\beta}{2}Tr\left[\log\mathcal{F}(\tau)+\left(\tilde{M}^{\top}+\omega\left(A_i-A_j\right)R^{\top}Q\right)\tau\right],\label{LT1_W}\\
\mathcal{B}(\tau)&=\mathcal{B}_{22}(\tau)^{-1}\mathcal{B}_{21}(\tau)\label{LT2_W}
\end{align}
and $\mathcal{B}_{22}(\tau),\mathcal{B}_{21}(\tau)$ are submatrices in:
\begin{align}
&\left(\begin{array}{cc}
	\mathcal{B}_{11}(\tau)&\mathcal{B}_{12}(\tau)\\
	\mathcal{B}_{21}(\tau)&\mathcal{B}_{22}(\tau)
\end{array}\right)=\nonumber\\
&\exp\tau\left[\begin{array}{cc}
	\tilde{M}+\omega Q^\top R\left(A_i-A_j\right)&-2Q^{\top}Q\\
	\frac{\omega^2-\omega}{2}\left(A_i-A_j\right)^2+(\omega-1)H^i-\omega H^j&-\left(\tilde{M}^{\top}+\omega\left(A_i-A_j\right)R^{\top}Q\right)\label{LT3_W}
\end{array}
\right].
\end{align}

\end{proposition}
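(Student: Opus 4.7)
The plan is to exploit the affine structure of the pair $(x^{i,j},\Sigma)$ under $\mathbb{Q}^i$: since $\Sigma$ is a Wishart process, $r^i=h^i+Tr[H^i\Sigma(t)]$ is affine in $\Sigma$, and both the drift and the infinitesimal variance of $x^{i,j}$ are affine in $\Sigma$ (see the dynamics derived in Section \ref{rn_meas}). By the general theory of affine processes (or directly via Feynman-Kac together with an exponential-affine ansatz), the discounted Laplace transform takes the form $G^{i,j}(\omega,\tau,x,\Sigma)=\exp[\omega x + \mathcal{A}(\tau)+Tr[\mathcal{B}(\tau)\Sigma]]$, with boundary conditions $\mathcal{A}(0)=\mathcal{B}(0)=0$. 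The proof therefore reduces to identifying the ODEs satisfied by $\mathcal{A}$ and $\mathcal{B}$ and integrating them in closed form.

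First I would write the backward Kolmogorov PDE $\partial_\tau G^{i,j} = \mathcal{L}G^{i,j} - r^i G^{i,j}$ associated with the generator $\mathcal{L}$ of $(x^{i,j},\Sigma)$ under $\mathbb{Q}^i$ and substitute the ansatz. The one delicate input is the cross covariation produced by the correlation structure \eqref{corrStruct}, which a direct computation in matrix components (using $dW^{\mathbb{Q}^i}=dZ^{\mathbb{Q}^i}R^\top + \text{indep.}$ and the symmetry of $A_i-A_j$) shows to be
\begin{equation*}
d\langle x^{i,j},\Sigma\rangle_t = \bigl(\Sigma(t)(A_i-A_j)R^\top Q + Q^\top R(A_i-A_j)\Sigma(t)\bigr)dt.
\end{equation*}
Plugging the ansatz into the PDE and matching, on one hand, the constants and, on the other, the terms linear in $\Sigma$ (using $\Omega\Omega^\top = \beta Q^\top Q$) yields the system
\begin{align*}
\dot{\mathcal{B}}(\tau) &= C_\omega + (\tilde{M}^\top + \omega(A_i-A_j)R^\top Q)\mathcal{B} + \mathcal{B}(\tilde{M}+\omega Q^\top R(A_i-A_j)) + 2\mathcal{B}Q^\top Q\mathcal{B},\\
\dot{\mathcal{A}}(\tau) &= \omega(h^i-h^j) - h^i + \beta\,Tr[Q^\top Q\,\mathcal{B}(\tau)],
\end{align*}
with $C_\omega = \tfrac{\omega^2-\omega}{2}(A_i-A_j)^2 + (\omega-1)H^i - \omega H^j$.

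The matrix Riccati equation for $\mathcal{B}$ is of the standard Wishart type and is linearized by the Grasselli-Tebaldi/Da Fonseca-Grasselli-Tebaldi ansatz $\mathcal{B}(\tau)=\mathcal{B}_{22}(\tau)^{-1}\mathcal{B}_{21}(\tau)$, where $\mathcal{B}_{21}$ and $\mathcal{B}_{22}$ are the lower-left and lower-right $d\times d$ sub-blocks of the matrix exponential $\exp(\tau\mathcal{H})$ displayed in \eqref{LT3_W}. The Hamiltonian (symplectic) structure of $\mathcal{H}$ guarantees, via the identity $\mathcal{B}_{22}^\top\mathcal{B}_{11} - \mathcal{B}_{12}^\top\mathcal{B}_{21} = I$, that $\mathcal{B}(\tau)$ is symmetric and that $\mathcal{B}_{22}(\tau)$ is invertible for $\tau$ in a neighbourhood of $0$; differentiating the ansatz and using the block equations recovers the Riccati. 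For $\mathcal{A}$, Jacobi's formula $\tfrac{d}{d\tau}\log\det\mathcal{B}_{22} = Tr[\mathcal{B}_{22}^{-1}\dot{\mathcal{B}}_{22}]$, combined with $\dot{\mathcal{B}}_{22} = C_\omega\mathcal{B}_{12} - (\tilde{M}^\top + \omega(A_i-A_j)R^\top Q)\mathcal{B}_{22}$, allows one to express $\beta\int_0^\tau Tr[Q^\top Q\,\mathcal{B}(s)]\,ds$ as $-\tfrac{\beta}{2}Tr[\log\mathcal{F}(\tau) + (\tilde{M}^\top + \omega(A_i-A_j)R^\top Q)\tau]$ with $\mathcal{F}(\tau)=\mathcal{B}_{22}(\tau)$, producing the expression \eqref{LT1_W}. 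The main obstacle is correctly identifying the effect of the correlation $R$ on the Riccati: it enters the \emph{drift} block $\tilde{M}+\omega Q^\top R(A_i-A_j)$ of the Hamiltonian but not its quadratic block, and this is precisely what forces the particular shape of \eqref{LT3_W}; once that bookkeeping is settled, the remaining steps (symplectic identities, Jacobi's formula) are routine.
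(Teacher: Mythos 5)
Your proposal is correct and follows essentially the same route as the paper: the exponential-affine ansatz, the cross-covariation computation from \eqref{corrStruct}, the matrix Riccati ODE with its linearization $\mathcal{B}=\mathcal{F}^{-1}\mathcal{G}$ (giving $\mathcal{B}(\tau)=\mathcal{B}_{22}^{-1}\mathcal{B}_{21}$ with $\mathcal{F}=\mathcal{B}_{22}$), and the log-determinant reduction for $\mathcal{A}$. One small slip worth flagging: for the Jacobi step the convenient block ODE is $\dot{\mathcal{B}}_{22}=-2\mathcal{B}_{21}Q^{\top}Q-\mathcal{B}_{22}\bigl(\tilde{M}^{\top}+\omega(A_i-A_j)R^{\top}Q\bigr)$ (obtained from $\tfrac{d}{d\tau}e^{\tau\mathcal{H}}=e^{\tau\mathcal{H}}\mathcal{H}$), which gives $Tr[\mathcal{B}_{22}^{-1}\dot{\mathcal{B}}_{22}]=-2Tr[Q^{\top}Q\,\mathcal{B}]-Tr[\tilde{M}^{\top}+\omega(A_i-A_j)R^{\top}Q]$ directly, whereas the form you wrote (from $\mathcal{H}e^{\tau\mathcal{H}}$) is also valid but needs the symplectic identities you allude to in order to close the loop.
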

\begin{proof} See the Appendix.\end{proof}

As we stated above, for $\omega=\mathtt{i}\lambda$ (where $\mathtt{i}=\sqrt{-1}$), we obtain the discounted characteristic function of the log-exchange rate, hence we can compute call option prices as e.g. in \cite{article_Carr99} via:
\begin{equation}
\frac{exp\left\{-\alpha c\right\}}{2\pi}\int_{-\infty}^{+\infty}\Re\left\{e^{-\mathtt{i}vc}\varphi(v)\right\}dv,
\end{equation}
where:
\begin{align}
\varphi(v)&=\frac{G\left(\mathtt{i}\left( v-\left(\alpha +1 \right)\mathtt{i}\right),t,T,x,V\right)}{\left(\alpha +\mathtt{i}v\right)\left(1+\alpha +\mathtt{i}v\right)}.
\end{align}

\subsection{Expansions}
The calibration of our model relies on a standard non-linear least squares procedure. This will be employed to minimize the distance between model and market implied volatilities. The model implied volatilities are extracted from the prices produced by the FFT routine. This procedure for the Wishart model is more demanding from a numerical point of view than the analog for the multi-Heston case e.g. in \cite{gnoatto11}. An alternative approach is to fit implied volatilities via a simpler function. A possibility is to find a relationship between the prices produced by the model, and the standard Black-Scholes formula. The next result states that it is possible to approximate the prices of options under the Wishart model, via a suitable expansion of the standard Black-Scholes formula and its derivatives, analogously to what has been done in \cite{gnoatto11}. The proof, that is reported in the appendix, relies on arguments that may be found in \cite{lewis2000} and \cite{dafgra11} (we drop all currency indices, it is intended that we are considering the $(i, j)$ FX pair). Define $\tau:=T-t$ and let us define the real deterministic functions $\tilde{\mathcal{B}}^0,\tilde{\mathcal{B}}^1,\tilde{\mathcal{B}}^{20},\tilde{\mathcal{B}}^{21}$ as follows:
\begin{align}
\tilde{\mathcal{B}}^0=&\int_{0}^{\tau}{e^{\left(\tau-u\right)\tilde{M}^{\top}}\left(A_i-A_j\right)e^{\left(\tau-u\right)\tilde{M}}du}\label{B0},\\
\tilde{\mathcal{B}}^1=&\int_{0}^{\tau}e^{\left(\tau-u\right)\tilde{M}^{\top}}\left(\tilde{\mathcal{B}}^0(u)Q^{\top}R\left(A_i-A_j\right)\right.\nonumber\\
&\left.+\left(A_i-A_j\right)R^{\top}Q\tilde{\mathcal{B}}^0(u)\right)e^{\left(\tau-u\right)\tilde{M}}du,\\
\tilde{\mathcal{B}}^{20}=&\int_{0}^{\tau}e^{\left(\tau-u\right)\tilde{M}^{\top}}2\tilde{\mathcal{B}}^0(u)Q^{\top}Q\tilde{\mathcal{B}}^0(u)e^{\left(\tau-u\right)\tilde{M}}du,\\
\tilde{\mathcal{B}}^{21}=&\int_{0}^{\tau}e^{\left(\tau-u\right)\tilde{M}^{\top}}\left(\tilde{\mathcal{B}}^1(u)Q^{\top}R\left(A_i-A_j\right)\right.\nonumber\\
&\left.+\left(A_i-A_j\right)R^{\top}Q\tilde{\mathcal{B}}^1(u)\right)e^{\left(\tau-u\right)\tilde{M}}du.\label{B21}
\end{align}
Moreover, the real deterministic scalar functions $\tilde{\mathcal{A}}^{0}(\tau),\tilde{\mathcal{A}}^{1}(\tau),\tilde{\mathcal{A}}^{20}(\tau),\tilde{\mathcal{A}}^{21}(\tau)$ are given by:
\begin{align}
\tilde{\mathcal{A}}^{0}(\tau)&=Tr\left[\Omega\Omega^{\top}\int_{0}^{\tau}{\tilde{\mathcal{B}}^0(u)du}\right],\label{mathcal_A0}\\
\tilde{\mathcal{A}}^{1}(\tau)&=Tr\left[\Omega\Omega^{\top}\int_{0}^{\tau}{\tilde{\mathcal{B}}^1(u)du}\right],\\
\tilde{\mathcal{A}}^{20}(\tau)&=Tr\left[\Omega\Omega^{\top}\int_{0}^{\tau}{\tilde{\mathcal{B}}^{20}(u)du}\right],\\
\tilde{\mathcal{A}}^{21}(\tau)&=Tr\left[\Omega\Omega^{\top}\int_{0}^{\tau}{\tilde{\mathcal{B}}^{21}(u)du}\right]\label{mathcal_A21}.
\end{align}
Finally, let
\begin{align} v=\sigma^2\tau=\tilde{\mathcal{A}}^0(\tau)+Tr\left[\tilde{\mathcal{B}}^0(\tau)\Sigma\right]\label{int_var_2}
\end{align}
be the integrated variance.

\begin{proposition}\label{exp_Wis1}
Assume that all interest rates are constant and the vol of vol matrix $Q$ has been scaled by the factor $\alpha>0$. Then the call price $C(S(t),K,\tau)$ in the Wishart-based exchange model can be approximated in terms of the vol of vol scale factor $\alpha$  by differentiating the Black Scholes formula $C_{B\&S}\left(S(t),K,\sigma, \tau\right)$ with respect to the log exchange rate $x(t)=\ln S(t)$ and the integrated variance $v=\sigma^2\tau$:
\begin{align}
C(S(t),K,\tau)
\approx &C_{B\&S}\left(S(t),K,\sigma,\tau\right)\nonumber\\
&+\alpha\left(\tilde{\mathcal{A}}^1(\tau)+Tr\left[\tilde{\mathcal{B}}^1(\tau)\Sigma(t)\right]\right)\partial^2_{xv}C_{B\&S}\left(S(t),K,\sigma,\tau\right)\nonumber\\
&+\alpha^2\left(\tilde{\mathcal{A}}^{20}(\tau)+Tr\left[\tilde{\mathcal{B}}^{20}(\tau)\Sigma(t)\right]\right)\partial^2_{v^2}C_{B\&S}\left(S(t),K,\sigma,\tau\right)\nonumber\\
&+\alpha^2\left(\tilde{\mathcal{A}}^{21}(\tau)+Tr\left[\tilde{\mathcal{B}}^{21}(\tau)\Sigma(t)\right]\right)\partial^3_{x^2v}C_{B\&S}\left(S(t),K,\sigma,\tau\right)\nonumber\\
&+\frac{\alpha^2}{2}\left(\tilde{\mathcal{A}}^{1}(\tau)+Tr\left[\tilde{\mathcal{B}}^{1}(\tau)\Sigma(t)\right]\right)^2\partial^4_{x^2v^2}C_{B\&S}\left(S(t),K,\sigma,\tau\right),
\end{align} 
\end{proposition}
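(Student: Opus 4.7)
My plan is a small vol-of-vol expansion of the discounted Laplace transform from Proposition \ref{LT_FW}, followed by Fourier inversion that reinterprets polynomial perturbations in $\omega$ as differential operators acting on the Black--Scholes price. The starting observation is that once $Q$ is replaced by $\alpha Q$ and constant rates are assumed ($H^i=H^j=0$), the block matrix inside the exponential in \eqref{LT3_W} decomposes as a background linear part plus corrections of order $\alpha$ (through $\alpha\omega Q^{\top}R(A_i-A_j)$ and its transpose) and order $\alpha^{2}$ (through $-2\alpha^{2}Q^{\top}Q$). The same scaling enters $\mathcal{A}$ via the $Tr[\log\mathcal{F}(\tau)]$ and the explicit $\omega(A_i-A_j)R^{\top}Q$ term. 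At $\alpha=0$ the Wishart factor $\Sigma(t)$ is deterministic, the log-FX rate is Gaussian with integrated variance $v$ defined in \eqref{int_var_2}, and $G^{i,j}$ collapses to the Black--Scholes MGF.

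Concretely, I would expand
\[
\mathcal{B}(\tau)=\mathcal{B}^{(0)}(\tau)+\alpha\mathcal{B}^{(1)}(\tau)+\alpha^{2}\mathcal{B}^{(2)}(\tau)+O(\alpha^{3}),\qquad \mathcal{A}(\tau)=\mathcal{A}^{(0)}(\tau)+\alpha\mathcal{A}^{(1)}(\tau)+\alpha^{2}\mathcal{A}^{(2)}(\tau)+O(\alpha^{3}),
\]
plug into the matrix Riccati satisfied by $\mathcal{B}$ (equivalently, read off the ODEs from the block-structure of \eqref{LT3_W}), and solve order by order by variation of constants against the propagator $e^{\tau\tilde M^{\top}}\cdot e^{\tau\tilde M}$. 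The zeroth-order ODE is linear and gives $\mathcal{B}^{(0)}(\tau)=\tfrac{\omega^{2}-\omega}{2}\tilde{\mathcal{B}}^{0}(\tau)$ with $\tilde{\mathcal{B}}^{0}$ as in \eqref{B0}; the first-order correction is driven by the $\omega Q^{\top}R(A_i-A_j)$ terms acting on $\mathcal{B}^{(0)}$, producing precisely $\tfrac{\omega^{2}-\omega}{2}\cdot\omega\,\tilde{\mathcal{B}}^{1}(\tau)$; the second-order correction splits into the Riccati quadratic contribution $\propto\tilde{\mathcal{B}}^{20}$ and the iterated correlation contribution $\propto\tilde{\mathcal{B}}^{21}$. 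Integrating $\Omega\Omega^{\top}=\beta Q^{\top}Q$ (now $O(\alpha^{2})$) against $\mathcal{B}^{(k)}$ recovers the scalar coefficients $\tilde{\mathcal{A}}^{0},\tilde{\mathcal{A}}^{1},\tilde{\mathcal{A}}^{20},\tilde{\mathcal{A}}^{21}$ of \eqref{mathcal_A0}--\eqref{mathcal_A21}.

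Having written $G^{i,j}=G_{B\&S}\cdot\exp\bigl(\alpha g_{1}(\omega,\tau)+\alpha^{2}g_{2}(\omega,\tau)+O(\alpha^{3})\bigr)$ with $g_{1},g_{2}$ polynomial in $\omega$ with coefficients depending on the $\tilde{\mathcal{A}}^{k},\tilde{\mathcal{B}}^{k}$, I would expand $\exp(\alpha g_{1}+\alpha^{2}g_{2})=1+\alpha g_{1}+\alpha^{2}(g_{2}+\tfrac{1}{2}g_{1}^{2})+O(\alpha^{3})$, which explains the $\tfrac{1}{2}(\tilde{\mathcal{A}}^{1}+Tr[\tilde{\mathcal{B}}^{1}\Sigma])^{2}$ term in the statement. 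Then I insert the expansion in the Fourier pricing formula \eqref{price2} and use the standard Lewis-type identification: for the Black--Scholes characteristic function, multiplication by $\omega$ corresponds to $-\mathtt{i}\partial_{x}$ and the combination $\tfrac{\omega^{2}-\omega}{2}$ (which is exactly the structure carried by $\mathcal{B}^{(0)}$ and its descendants) corresponds to $\partial_{v}$ acting on $C_{B\&S}$. Matching powers of $\omega$ in $g_{1},g_{2},g_{1}^{2}$ to mixed derivatives $\partial^{2}_{xv},\partial^{2}_{v^{2}},\partial^{3}_{x^{2}v},\partial^{4}_{x^{2}v^{2}}$ of $C_{B\&S}$ yields the stated expansion.

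The main obstacle is the bookkeeping of the perturbative solution of the matrix Riccati: one must keep track of non-commuting products of $\tilde M$, $Q$, $R$ and $(A_i-A_j)$ and verify that, after symmetrization, the first- and second-order corrections collapse exactly into the integral representations \eqref{B0}--\eqref{B21} and their scalar traces. Once the matrix-algebra identities line up, the Fourier-to-derivative translation follows the now-standard argument in Lewis and in Da Fonseca--Grasselli, so I would defer the detailed identification to the Appendix as the authors do.
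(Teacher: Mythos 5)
Your approach is essentially the paper's: perturbative solution of the matrix Riccati ODE for $\mathcal{B}$ in powers of $\alpha$ by variation of constants against the propagator $e^{(\tau-u)\tilde M^{\top}}\cdot e^{(\tau-u)\tilde M}$, integration of the expansion into $\mathcal{A}$, then Fourier inversion with the identification $\omega=\mathtt{i}\lambda\leftrightarrow\partial_x$ and $\gamma=\frac{\omega^2-\omega}{2}\leftrightarrow\partial_v$ applied to the Black--Scholes characteristic function, with the cross term $\tfrac{\alpha^2}{2}g_1^2$ generating the $\partial^4_{x^2v^2}$ contribution.

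There is, however, one concrete flaw in your reasoning that would derail the orders of $\alpha$ if carried out literally. You write that $\Omega\Omega^{\top}=\beta Q^{\top}Q$ is ``now $O(\alpha^{2})$'' after scaling $Q\mapsto\alpha Q$, and then claim that integrating it against $\mathcal{B}^{(k)}$ recovers $\tilde{\mathcal{A}}^{0},\tilde{\mathcal{A}}^{1},\tilde{\mathcal{A}}^{20},\tilde{\mathcal{A}}^{21}$ at the orders stated in the proposition. These two assertions are incompatible: if $\Omega\Omega^{\top}$ carried a prefactor $\alpha^{2}$, then $Tr[\Omega\Omega^{\top}\int\tilde{\mathcal{B}}^{0}]$ would enter only at order $\alpha^{2}$, $\tilde{\mathcal{A}}^{1}$ would first appear at order $\alpha^{3}$, and the integrated variance $v=\tilde{\mathcal{A}}^0(\tau)+Tr[\tilde{\mathcal{B}}^0(\tau)\Sigma]$ in \eqref{int_var_2} would lose its $\tilde{\mathcal{A}}^0$ piece at leading order. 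The paper's proof scales $Q$ only in the Riccati ODE for $\mathcal{B}$ (i.e.\ in the diffusion/correlation terms $\alpha\omega Q^{\top}R(A_i-A_j)$ and $2\alpha^{2}Q^{\top}Q$), while $\Omega\Omega^{\top}$ in the ODE $\partial_\tau\mathcal{A}=\omega(h^i-h^j)-h^i+Tr[\Omega\Omega^{\top}\mathcal{B}(\tau)]$ is left unscaled; this is precisely what makes $\tilde{\mathcal{A}}^0$ enter at order $\alpha^0$ and $\tilde{\mathcal{A}}^1$ at order $\alpha$. In other words, the strong-existence constraint $\Omega\Omega^{\top}=\beta Q^{\top}Q$ is not carried through the perturbation; only the noise (vol of vol) is scaled. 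If you remove the parenthetical and treat $\Omega\Omega^{\top}$ as fixed, your argument lines up with the paper's. A second, more cosmetic slip: with $\omega=\mathtt{i}\lambda$ and the Fourier kernel $e^{\mathtt{i}\lambda x}$, multiplication by $\omega$ corresponds to $\partial_x$ (not $-\mathtt{i}\partial_x$); this is only a sign/convention matter and does not affect the final identification of the mixed derivatives.
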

\begin{proof} See the Appendix.\end{proof}

Finally,  we can state another formula, that does not involve the computation of option prices and constitutes an approximation of the implied volatility surface for a short time to maturity. This formula may be a useful alternative in order to get a quicker calibration for short maturities. 
\begin{proposition}\label{exp_Wis2}
Assume that all interest rates are constant. For a short time to maturity the implied volatility expansion in terms of the vol-of-vol scale factor $\alpha$ in the Wishart-based exchange model is given by:
\begin{align}
\sigma_{imp}^2\approx & Tr\left[\left(A_i-A_j\right)\Sigma(t)\left(A_i-A_j\right)\right]+\alpha\frac{Tr\left[\left(A_i-A_j\right)^2Q^{\top}R\left(A_i-A_j\right)\Sigma(t)\right]m_f}{Tr\left[\left(A_i-A_j\right)\Sigma(t)\left(A_i-A_j\right)\right]}\nonumber\\
&+\alpha^2\frac{m_f^2}{Tr\left[\left(A_i-A_j\right)\Sigma(t)\left(A_i-A_j\right)\right]^2}\Bigg[\frac{1}{3}Tr\left[\left(A_i-A_j\right)Q^\top Q\left(A_i-A_j\right)\Sigma(t)\right]\Bigg.\nonumber\\
&+\frac{1}{3}Tr\Big[\left[\left(A_i-A_j\right)Q^{\top}R\left(A_i-A_j\right)+\left(A_i-A_j\right)R^{\top}Q\left(A_i-A_j\right)\right]\Big.\nonumber\\
&\Bigg.\Big.\times Q^\top R\left(A_i-A_j\right)\Sigma(t)\Big]-\frac{5}{4}\frac{Tr\left[\left(A_i-A_j\right)Q^{\top}R\left(A_i-A_j\right)\Sigma(t)\right]^2}{Tr\left[\left(A_i-A_j\right)\Sigma(t)\left(A_i-A_j\right)\right]}\Bigg].
\end{align}
where $m_f=\log\left(\frac{S^{i,j}(t)e^{(r_i-r_j)\tau}}{K}\right)$ denotes the log-moneyness.
\end{proposition}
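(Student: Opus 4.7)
The strategy is to combine the call--price expansion of Proposition \ref{exp_Wis1} with an ansatz for the implied variance, and then to perform a short-time expansion of the resulting deterministic coefficients. Write
\begin{equation*}
\sigma_{imp}^2 = \sigma_0^2 + \alpha\, s_1 + \alpha^2\, s_2 + O(\alpha^3),
\end{equation*}
and Taylor-expand $C_{B\&S}(S,K,\sigma_{imp},\tau)$ around $\sigma_0$ in the integrated-variance variable $v=\sigma^2\tau$, keeping contributions up to order $\alpha^2$. Imposing equality with the Wishart price expansion of Proposition \ref{exp_Wis1} and matching powers of $\alpha$ produces a hierarchy: at order $\alpha^0$, $\sigma_0^2\tau = \tilde{\mathcal{A}}^0(\tau)+Tr[\tilde{\mathcal{B}}^0(\tau)\Sigma]$; at order $\alpha^1$, a linear equation for $s_1$ involving the Greek ratio $\partial^2_{xv}C_{B\&S}/\partial_v C_{B\&S}$; and at order $\alpha^2$, a linear equation for $s_2$ involving the ratios $\partial^2_{v^2}/\partial_v$, $\partial^3_{x^2v}/\partial_v$, $\partial^4_{x^2v^2}/\partial_v$, together with the back-reaction $-\tfrac{\tau}{2}s_1^2\,\partial^2_{v^2}C_{B\&S}/\partial_v C_{B\&S}$ coming from the second order Taylor expansion of $C_{B\&S}$ in $\sigma^2$.

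The second step is to reduce every Greek ratio in this hierarchy to an explicit rational function of log-moneyness and $v$. Using the classical heat-kernel identities for $C_{B\&S}$ expressed in the variables $(x,v)$ (see \cite{lewis2000}), every ratio $\partial^{j+k}_{x^jv^k}C_{B\&S}/\partial_v C_{B\&S}$ reduces to a polynomial in $d_1,d_2$ and $v^{-1/2}$. Since interest rates are constant, the log-moneyness $m_f=\log(S^{i,j}(t)e^{(r^i-r^j)\tau}/K)$ absorbs the deterministic drift, so these polynomials simplify to explicit rational functions of $m_f$ and $v$; in particular $\partial^2_{xv}/\partial_v = m_f/v$ at leading order in $\tau$, which is what produces the $m_f$ factor at order $\alpha$ and, via the quadratic back-reaction, the $m_f^2$ factors at order $\alpha^2$.

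The third step is a short-time expansion of the deterministic coefficients. Using $e^{u\tilde{M}}=I+u\tilde{M}+O(u^2)$ inside the definitions \eqref{B0}--\eqref{B21} and \eqref{mathcal_A0}--\eqref{mathcal_A21}, one obtains the leading behaviour $\tilde{\mathcal{B}}^0(\tau)=O(\tau)$, $\tilde{\mathcal{B}}^1(\tau)=O(\tau^2)$, $\tilde{\mathcal{B}}^{20}(\tau),\tilde{\mathcal{B}}^{21}(\tau)=O(\tau^3)$, with the corresponding leading matrices read off by evaluating each integrand at $u=0$. The scalar pieces $\tilde{\mathcal{A}}^k$ start at strictly higher order in $\tau$ and drop out at leading order. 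Substituting these asymptotics into the order-by-order equations, using cyclicity of the trace and symmetry of $A_i,A_j$ to unify equivalent expressions, and dividing through by $\tau$ yields the three orders in $\alpha$ displayed in the statement.

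The main obstacle will be the careful bookkeeping at order $\alpha^2$: one must keep together the direct contributions from $\tilde{\mathcal{A}}^{20},\tilde{\mathcal{A}}^{21},\tilde{\mathcal{B}}^{20},\tilde{\mathcal{B}}^{21}$ and the indirect quadratic back-reaction proportional to $s_1^2$, tracking the exact numerical coefficients produced by the various Greek ratios in the short-time limit. The combination $(\tilde{\mathcal{B}}^1(\tau))^2/\tilde{\mathcal{B}}^0(\tau)$ obtained after squaring $s_1$ is what produces the characteristic $-\tfrac{5}{4}$ coefficient in front of $Tr[(A_i-A_j)Q^\top R(A_i-A_j)\Sigma]^2/Tr[(A_i-A_j)\Sigma(A_i-A_j)]$ in the final formula, after cancellations among the $d_1,d_2$ polynomials coming from $\partial^4_{x^2v^2}/\partial_v$ and $\partial^2_{v^2}/\partial_v$.
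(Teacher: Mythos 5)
Your proposal follows the same route as the paper: posit an $\alpha$-expansion of the integrated implied variance, Taylor-expand $C_{B\&S}$ in the integrated-variance variable around the zeroth-order term, match powers of $\alpha$ against the price expansion of Proposition \ref{exp_Wis1} to get the hierarchy (including the quadratic back-reaction from the second-order Taylor term), reduce the Greek ratios to rational functions of $m_f$ and $V$ via the Lewis identities, and finally expand the coefficient functions $\tilde{\mathcal{B}}^0,\tilde{\mathcal{B}}^1,\tilde{\mathcal{B}}^{20},\tilde{\mathcal{B}}^{21}$ to leading order in $\tau$, noting that the scalar pieces $\tilde{\mathcal{A}}^k$ are one order higher and drop out. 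This matches the paper's proof step for step; the only cosmetic difference is that you parametrize in implied-variance increments $s_k$ while the paper uses integrated-variance increments $\zeta_k = s_k\tau$.
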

\begin{proof} See the Appendix. \end{proof}

\subsection{The pricing of zero-coupon Bonds}

The flexibility of the Wishart hybrid model opens up the possibility to perform a simultaneous calibration of interest rate and foreign exchange related products. This is a preliminary step that guarantees a coherent framework for the evaluation of payoffs depending on several interest rate curves under different currencies. Examples of such products may be found e.g. in \cite{brigo06} and \cite{bookClark11}. We provide a closed-form formula for the price of a zero-coupon bond, that constitutes a building block for many other linear interest rate products. The following proposition may be easily proved along the same lines of Proposition \ref{LT_FW}.

\begin{proposition}
The price of a zero-coupon bond at time $t$, with maturity $T$, under a generic risk neutral measure $\mathbb{Q}_d$, is given by
\begin{align}\label{bondPrice}
\mathbb{E}^{\mathbb{Q}_d}\left[\left.e^{-\int_t^T\left(h^d+Tr\left[H^d\Sigma(s)\right]\right)ds}\right|\mathcal{F}_t\right]=\exp\left\{\mathcal{A}^{ZC}(\tau)+Tr\left[\mathcal{B}^{ZC}(\tau)\Sigma(t)\right]\right\}
\end{align}
where the deterministic functions $\mathcal{A}^{ZC},\mathcal{B}^{ZC}$ satisfy the system of matrix ODE
\begin{align}
\frac{\partial \mathcal{A}^{ZC}}{\partial \tau}=&Tr\left[\beta Q^\top Q\mathcal{B}^{ZC}(\tau)\right]-h^d, \ \mathcal{A}^{ZC}(0)=0,\\
\frac{\partial \mathcal{B}^{ZC}}{\partial \tau}=&\mathcal{B}^{ZC}(\tau)M^{\mathbb{Q}_d}+M^{\mathbb{Q}_d,\top}\mathcal{B}^{ZC}(\tau)\nonumber\\
&+2\mathcal{B}^{ZC}(\tau)Q^\top Q\mathcal{B}^{ZC}(\tau)-H^d, \ \mathcal{B}^{ZC}(0)=0,
\end{align}
whose solution is given by
\begin{align}
\mathcal{A}^{ZC}(\tau)&=-\frac{\beta}{2}Tr\left[\log B^{ZC}_{22}(\tau)+\tau M^{\mathbb{Q}_d}\right]-h^d\tau,\\
\mathcal{B}^{ZC}(\tau)&=\mathcal{B}^{ZC}_{22}(\tau)^{-1}\mathcal{B}^{ZC}_{21}(\tau)\label{LT2_ZC}
\end{align}
and $\mathcal{B}^{ZC}_{22}(\tau),\mathcal{B}^{ZC}_{21}(\tau)$ are submatrices in:
\begin{align}
\left(\begin{array}{cc}
	\mathcal{B}^{ZC}_{11}(\tau)&\mathcal{B}^{ZC}_{12}(\tau)\\
	\mathcal{B}^{ZC}_{21}(\tau)&\mathcal{B}^{ZC}_{22}(\tau)
\end{array}\right)
=\exp\left[ \tau\left(\begin{array}{cc}
	M^{\mathbb{Q}_d}&-2Q^{\top}Q\\
	-H^d&-M^{\mathbb{Q}_d,\top}
\end{array}\right)
\right].\label{LT3_ZC}
\end{align}
\end{proposition}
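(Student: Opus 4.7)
The plan is to follow closely the pattern of the proof of Proposition~\ref{LT_FW}, adapted to the conditional expectation defining the bond price. First I would invoke Feynman--Kac: the function $P(t,T) := \mathbb{E}^{\mathbb{Q}_d}_t\bigl[\exp(-\int_t^T r^d_s\,ds)\bigr]$ must, together with the discount factor, have vanishing drift under $\mathbb{Q}_d$. I then make the exponentially affine ansatz $P(t,T) = \exp\{\mathcal{A}^{ZC}(\tau) + Tr[\mathcal{B}^{ZC}(\tau)\Sigma(t)]\}$ with $\tau = T - t$ and initial conditions $\mathcal{A}^{ZC}(0) = 0$, $\mathcal{B}^{ZC}(0) = 0$; this form is natural because the setting lies inside the affine family on $S_d^+$ characterized in Cuchiero et al., and the target payoff corresponds to the particular parameter choice $\omega = 0$ in Proposition~\ref{LT_FW}.

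Applying It\^o's lemma to the ansatz under the $\mathbb{Q}_d$-dynamics of $\Sigma$, the infinitesimal generator acting on $e^{Tr[\mathcal{B}\Sigma]}$ produces the standard Wishart expression
\[
\mathcal{L}\,e^{Tr[\mathcal{B}\Sigma]} = e^{Tr[\mathcal{B}\Sigma]}\Bigl(Tr\bigl[(\Omega\Omega^\top + M^{\mathbb{Q}_d}\Sigma + \Sigma M^{\mathbb{Q}_d,\top})\mathcal{B}\bigr] + 2\,Tr[\Sigma \mathcal{B} Q^\top Q \mathcal{B}]\Bigr),
\]
where the quadratic contribution arises from the matrix quadratic covariation of $\Sigma$. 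Setting $\partial_t P - r^d P + \mathcal{L}P = 0$, substituting $r^d = h^d + Tr[H^d\Sigma]$ and $\Omega\Omega^\top = \beta Q^\top Q$, and using the cyclic property of the trace to separate the $\Sigma$-independent part from the $\Sigma$-linear part, I obtain exactly the scalar ODE for $\mathcal{A}^{ZC}$ and the matrix Riccati ODE for $\mathcal{B}^{ZC}$ stated in the proposition.

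The main obstacle is the closed-form resolution of the matrix Riccati ODE. I would employ the classical linearization used in the Wishart literature: writing $\mathcal{B}^{ZC}(\tau) = F(\tau)^{-1} G(\tau)$ turns the Riccati equation into a first-order linear $2d \times 2d$ system for the pair $(G, F)$, whose coefficient matrix is precisely the Hamiltonian block matrix appearing in~\eqref{LT3_ZC}. Integrating this linear system from the initial datum $(G(0), F(0)) = (0, I_d)$ via the matrix exponential and reading off the appropriate blocks yields $\mathcal{B}^{ZC}(\tau) = \mathcal{B}^{ZC}_{22}(\tau)^{-1}\mathcal{B}^{ZC}_{21}(\tau)$.

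Finally, once $\mathcal{B}^{ZC}$ is in hand, direct integration of the scalar ODE gives $\mathcal{A}^{ZC}(\tau) = \beta\int_0^\tau Tr[Q^\top Q\,\mathcal{B}^{ZC}(s)]\,ds - h^d \tau$. The compact form in the statement is recovered via the identity $\tfrac{d}{d\tau}Tr[\log \mathcal{B}^{ZC}_{22}(\tau)] = Tr[\mathcal{B}^{ZC}_{22}(\tau)^{-1}\dot{\mathcal{B}}^{ZC}_{22}(\tau)]$ combined with the linear ODE satisfied by the lower-right block $\mathcal{B}^{ZC}_{22}$ (the bottom row of the Hamiltonian system). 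Collecting terms reduces the time integral to $-\tfrac{\beta}{2}Tr[\log \mathcal{B}^{ZC}_{22}(\tau) + \tau M^{\mathbb{Q}_d}]$, yielding the closed-form expression for $\mathcal{A}^{ZC}(\tau)$ claimed in the proposition.
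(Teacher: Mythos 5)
Your proposal is correct and takes essentially the same route as the paper intends: the paper states that this proposition "may be easily proved along the same lines of Proposition~\ref{LT_FW}", and your argument is precisely that template specialized to the zero-coupon bond. In particular, your observation that the bond price is the $\omega = 0$ case of the discounted Laplace transform in Proposition~\ref{LT_FW} is the right shortcut — setting $\omega = 0$ collapses the Hamiltonian block matrix \eqref{LT3_W} exactly to \eqref{LT3_ZC}, the Riccati source term becomes $-H^d$, and the scalar ODE reduces to $\dot{\mathcal{A}}^{ZC} = \beta\,Tr[Q^{\top}Q\,\mathcal{B}^{ZC}] - h^d$. The Feynman--Kac derivation of the PDE, the exponential-affine ansatz, the linearization $\mathcal{B}^{ZC} = F^{-1}G$ with $(G(0),F(0))=(0,I_d)$, the matrix-exponential solution reading off $(\mathcal{B}^{ZC}_{21},\mathcal{B}^{ZC}_{22})$, and the use of Jacobi's formula $\tfrac{d}{d\tau}Tr[\log\mathcal{B}^{ZC}_{22}] = Tr[(\mathcal{B}^{ZC}_{22})^{-1}\dot{\mathcal{B}}^{ZC}_{22}]$ together with the bottom-row block ODE to obtain the closed form for $\mathcal{A}^{ZC}$ are all exactly the steps the paper uses in the proof of Proposition~\ref{LT_FW}.
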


Given this simple formula for zero-coupon bond, we may consider a joint calibration to the FX smile and to the risk-free curve of two different economies, so as to capture simultaneously the information coming from the FX smile and the interest rate curve, that plays an important role for long maturities. More precisely, a direct inspection of formula \eqref{bondPrice}, reveals that the Wishart short rate model belongs to the class of affine term structure models, meaning that the yield curve is of a particularly simple form. For fixed $\Sigma(t)$ we define the yield curve as the function
$Y:\mathbb{R}_{\geq 0}\rightarrow \mathbb{R}_{\geq 0}
$ with
\begin{align}
Y(\tau)=-\frac{1}{\tau}\left(\mathcal{A}^{ZC}(\tau)+Tr\left[\mathcal{B}^{ZC}(\tau)\Sigma(t)\right]\right)\label{ycdef}.
\end{align}
In Section \ref{caliFXIR} we provide a concrete calibration example involving the simultaneous fit of the FX surface and two yield-curves.

\subsection{The pricing of a Cap}
For the sake of completeness, we also report the pricing of an interest rate cap. We would like to point out that the formula presented here is derived under the assumption of a single curve interest-rate framework. A generalization involving the recent developments in the context of interest rate modelling, like multiple-curve models/OIS discounting, is beyond the scope of the paper.\\

Suppose we have a sequence of resetting dates $\left\{T_\gamma,\cdots,T_{\beta-1}\right\}$ and payment dates $\left\{T_{\gamma+1},\cdots,T_{\beta}\right\}$ at which the Libor rate $L(T_i,T_i,T_{i+1})=L(T_i,T_{i+1})$ observed at time $T_{i}$ for the time span between $T_{i}$ and $T_{i+1}$ is paid. The price at time $t<T_{\gamma}$ of an interest rate cap with notional $N$, strike price $K$ is given by

\begin{align}
C(t,K)=\sum_{i=\gamma+1}^{\beta}N\tau_{i}\mathbb{E}^{\mathbb{Q}_{d}}\left[\left.e^{-\int_{t}^{T_{i}}r_sds}\left(L(T_{i-1},T_{i})-K\right)^+\right|\mathcal{F}_t\right].
\end{align}
It is well known that the expectation above may be conveniently rewritten as a put option on a zero coupon bond
\begin{align}
Cap(t,K)&=\sum_{i=\gamma+1}^{\beta}N\left(1+\tau_iK\right)\nonumber\\
&\times\mathbb{E}^{\mathbb{Q}_{d}}\left[\left.e^{-\int_{t}^{T_{i-1}}r_sds}\left(\frac{1}{1+\tau_iK}-P(T_{i-1},T_{i})\right)^+\right|\mathcal{F}_t\right].
\end{align}

Given the nice analytical properties of the Wishart process, we would like to compute this expectation using Fourier methods. The standard approach, in the presence of such an expectation, would involve a change to a forward risk-neutral measure. While this is also possible in this case, we remark that the resulting dynamics of the Wishart process would involve time varying coefficients, which in turn would lead to the solution of time dependent  Matrix Riccati differential equations, a highly non trivial problem. Fortunately, this kind of problem can be conveniently overcome, since, in our setting, it is easy to compute expectations that involve the discount factor and the zero-coupon bond. In fact, without loss of generality, let $\tau_i=\tau, \ \forall i$ and introduce a positive constant $\alpha > 0$ and $k=\log(\frac{1}{1+\tau K})$: from e.g. \cite{article_Carr99} it follows that the price of a put on a zer-coupon bond can be written as

\begin{align}
Put^\alpha(t,k)=\frac{e^{\alpha k}}{2\pi}\int_0^{\infty}\Re\left\{e^{-\mathtt{i}vk}\psi(v)\right\}dv
\end{align}
where
\begin{align}
\psi(v)=\frac{\mathbb{E}^{\mathbb{Q}_d}\left[\left.e^{-\int_{t}^{T_{i-1}}r_sds}e^{\mathtt{i}\left(v-(-\alpha+1)\mathtt{i}\right)\log P(T_{i-1},T_{i})}\right|\mathcal{F}_t\right]}{\alpha^2-\alpha-v^2+\mathtt{i}\left(-2\alpha+1\right)v}.
\end{align}

In the denominator of the expression above, we recognize the time $t$ conditional discounted characteristic function of the zero coupon bond at time $T_{i-1}$ having maturity $T_{i}$ evaluated at the point $v-(-\alpha+1)\mathtt{i}$. While this object may seem difficult to compute, we remarkt that under our assumption on the shape of the short interest rate
\begin{align}
r_t=h^d+Tr\left[H^d\Sigma(t)\right],
\end{align}
we can perform the computation, by relying on the techniques we have been already employing.

\begin{proposition}
The time $t$ conditional discounted characteristic function of the logarithmic zero-coupon bond price with maturity $T_{i}$, at time $T_{i-1}$, evaluated at the point $\omega$ is given by

\begin{align}
&\mathbb{E}^{\mathbb{Q}_d}\left[\left.e^{-\int_t^{T_{i-1}}\left(h^d+Tr\left[H^d\Sigma(S)\right]\right)ds+\omega\log P(T_{i-1},T_{i})}\right|\mathcal{F}_t\right]\nonumber\\
&=\exp\left\{\omega\mathcal{A}^{ZC}(T_{i}-T_{i-1})+\mathcal{A}^{LZC}(T_{i-1}-t)\right.\nonumber\\
&\left.+Tr\left[\mathcal{A}^{LZC}(T_{i-1}-t)\Sigma(t)\right]\right\}
\end{align}
where the deterministic functions $\mathcal{A}^{LZC},\mathcal{B}^{LZC}$ satisfy the system of matrix ODE
\begin{align}
\frac{\partial \mathcal{A}^{LZC}}{\partial \tau}&=Tr\left[\beta Q^\top Q\mathcal{B}^{LZC}(\tau)\right]-h^d, \ \mathcal{A}^{LZC}(0)=0,\\
\frac{\partial \mathcal{B}^{LZC}}{\partial \tau}&=\mathcal{B}^{LZC}(\tau)M^{\mathbb{Q}_d}+M^{\mathbb{Q}_d,\top}\mathcal{B}^{LZC}(\tau)\nonumber\\
&+2\mathcal{B}^{LZC}(\tau)Q^\top Q\mathcal{B}^{LZC}(\tau)-H^d, \ \mathcal{B}^{LZC}(0)=\omega\mathcal{B}^{ZC}(T_{i}-T_{i-1}),
\end{align}
whose solution is give by
\begin{align}
\mathcal{A}^{LZC}(\tau)=-\frac{\beta}{2}Tr\left[\log B^{LZC}_{22}(\tau)+\tau M^{\mathbb{Q}_d}\right]-h^d\tau,
\end{align}
\begin{align}
\mathcal{B}^{LZC}(\tau)&=\left(\omega\mathcal{B}^{ZC}(T_{i}-T_{i-1})\mathcal{B}^{LZC}_{12}(\tau)+\mathcal{B}^{LZC}_{22}(\tau)\right)^{-1}\nonumber\\
&\times\left(\omega\mathcal{B}^{ZC}(T_{i}-T_{i-1})\mathcal{B}^{LZC}_{11}(\tau)+\mathcal{B}^{LZC}_{21}(\tau)\right)\label{LT2_LZC}
\end{align}
and $\mathcal{B}^{LZC}_{22}(\tau),\mathcal{B}^{LZC}_{21}(\tau)$ are submatrices in:
\begin{align}
\left(\begin{array}{cc}
	\mathcal{B}^{LZC}_{11}(\tau)&\mathcal{B}^{LZC}_{12}(\tau)\\
	\mathcal{B}^{LZC}_{21}(\tau)&\mathcal{B}^{LZC}_{22}(\tau)
\end{array}\right)
=\exp\left[ \tau\left(\begin{array}{cc}
	M^{\mathbb{Q}_d}&-2Q^{\top}Q\\
	-H^d&-M^{\mathbb{Q}_d,\top}
\end{array}\right)
\right].\label{LT3_LZC}
\end{align}

\end{proposition}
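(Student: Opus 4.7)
The strategy mirrors the derivation of Proposition~\ref{LT_FW} and the zero-coupon bond formula \eqref{bondPrice}, but with a non-zero initial condition for the matrix Riccati coefficient. First I would use the tower property and condition on $\mathcal{F}_{T_{i-1}}$. Since the zero-coupon bond price at $T_{i-1}$ is already known in closed form from \eqref{bondPrice},
\begin{align*}
P(T_{i-1},T_i) = \exp\left\{\mathcal{A}^{ZC}(T_i-T_{i-1}) + Tr\left[\mathcal{B}^{ZC}(T_i-T_{i-1})\Sigma(T_{i-1})\right]\right\},
\end{align*}
pulling the deterministic factor $e^{\omega \mathcal{A}^{ZC}(T_i-T_{i-1})}$ out of the expectation leaves us with
\begin{align*}
e^{\omega\mathcal{A}^{ZC}(T_i-T_{i-1})}\,\mathbb{E}^{\mathbb{Q}_d}\!\left[\left. e^{-\int_t^{T_{i-1}}(h^d+Tr[H^d\Sigma(s)])ds}\,e^{Tr[\omega \mathcal{B}^{ZC}(T_i-T_{i-1})\Sigma(T_{i-1})]}\right|\mathcal{F}_t\right],
\end{align*}
which is a standard discounted exponential-affine functional of the Wishart process with a non-trivial linear terminal payoff in $\Sigma(T_{i-1})$.

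Next I would exploit the affine structure to postulate that the expectation is of the form $\exp\{\mathcal{A}^{LZC}(\tau)+Tr[\mathcal{B}^{LZC}(\tau)\Sigma(t)]\}$ with $\tau = T_{i-1}-t$. Applying It\^o's formula under $\mathbb{Q}_d$ (with drift matrix $M^{\mathbb{Q}_d}$ arising from the measure change derived in Section~\ref{rn_meas}) and using the martingale condition for the discounted candidate, one matches the $dt$ terms linear and constant in $\Sigma$ separately. This yields exactly the system of matrix Riccati ODEs stated in the proposition, with terminal data at $\tau=0$ given by $\mathcal{A}^{LZC}(0)=0$ and $\mathcal{B}^{LZC}(0)=\omega\mathcal{B}^{ZC}(T_i-T_{i-1})$; the quadratic term $2\mathcal{B}^{LZC}Q^\top Q\mathcal{B}^{LZC}$ comes from the diffusion contribution $Tr[(\sqrt{\Sigma}dWQ+Q^\top dW^\top\sqrt{\Sigma})\mathcal{B}^{LZC}]$ of the Wishart, exactly as in the proof of Proposition~\ref{LT_FW}.

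Finally, I would solve the matrix Riccati equation for $\mathcal{B}^{LZC}$ by the usual linearization trick: write $\mathcal{B}^{LZC}(\tau) = F(\tau)^{-1}G(\tau)$ and observe that the pair $(G,F)$ satisfies the linear Hamiltonian system
\begin{align*}
\frac{d}{d\tau}\begin{pmatrix}G(\tau)\\ F(\tau)\end{pmatrix} = \begin{pmatrix}M^{\mathbb{Q}_d} & -2Q^\top Q\\ -H^d & -M^{\mathbb{Q}_d,\top}\end{pmatrix}\begin{pmatrix}G(\tau)\\ F(\tau)\end{pmatrix},
\end{align*}
so the solution is the matrix exponential of the same $2d\times 2d$ block matrix that appears in \eqref{LT3_ZC}. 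The decisive difference with the pure zero-coupon case is the initial condition: instead of $(G(0),F(0))=(0,I_d)$, we must start from $(G(0),F(0))=(\omega\mathcal{B}^{ZC}(T_i-T_{i-1}),I_d)$. Propagating this through the block matrix exponential precisely produces the combination $\omega\mathcal{B}^{ZC}(T_i-T_{i-1})\mathcal{B}^{LZC}_{11}+\mathcal{B}^{LZC}_{21}$ in the numerator and $\omega\mathcal{B}^{ZC}(T_i-T_{i-1})\mathcal{B}^{LZC}_{12}+\mathcal{B}^{LZC}_{22}$ in the denominator of \eqref{LT2_LZC}. The scalar component $\mathcal{A}^{LZC}$ is then recovered by direct integration; using the identity $\frac{d}{d\tau}\log\det F(\tau) = Tr[F(\tau)^{-1}F'(\tau)]$ together with the Hamiltonian system gives the compact closed form $-\frac{\beta}{2}Tr[\log \mathcal{B}^{LZC}_{22}(\tau)+\tau M^{\mathbb{Q}_d}]-h^d\tau$.

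\textbf{Main obstacle.} The only genuinely new technical point beyond Proposition~\ref{LT_FW} is the bookkeeping for the non-zero initial datum $\mathcal{B}^{LZC}(0)=\omega \mathcal{B}^{ZC}(T_i-T_{i-1})$: one must carefully verify that the linearization indeed produces the two-block combination appearing in \eqref{LT2_LZC}, and check invertibility of $\omega\mathcal{B}^{ZC}(T_i-T_{i-1})\mathcal{B}^{LZC}_{12}(\tau)+\mathcal{B}^{LZC}_{22}(\tau)$ on the relevant domain of $\omega$ so that the representation is well defined.
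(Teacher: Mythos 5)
Your proposal is correct and follows exactly the route the paper has in mind (the paper does not give a separate proof for this proposition; it simply inherits the argument of Proposition~\ref{LT_FW}, adapted to a short-rate payoff with a nontrivial terminal condition). You have identified the one new ingredient precisely: conditioning on $\mathcal{F}_{T_{i-1}}$ via the tower property, substituting the closed form \eqref{bondPrice} for $P(T_{i-1},T_i)$, and then solving the same matrix Riccati system as in \eqref{LT3_ZC} but with $\mathcal{B}^{LZC}(0)=\omega\mathcal{B}^{ZC}(T_i-T_{i-1})$ instead of $0$; propagating the initial datum $(G(0),F(0))=(\omega\mathcal{B}^{ZC}(T_i-T_{i-1}),I_d)$ through the block matrix exponential produces exactly the two-block combination in \eqref{LT2_LZC}. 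The invertibility remark you raise at the end is a genuine point that the paper glosses over and is a useful caveat; it does not invalidate the derivation but should be stated as a domain restriction on $\omega$. Incidentally, note that the displayed formula in the proposition contains a typographical slip --- the term inside the trace should read $\mathcal{B}^{LZC}(T_{i-1}-t)\Sigma(t)$ rather than $\mathcal{A}^{LZC}(T_{i-1}-t)\Sigma(t)$ --- and your derivation produces the corrected version.
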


With the previous result, caps and floors can be easily priced within our framework.

\section{The multi-Heston model of \cite{gnoatto11} as a nested approach}

In this section we follow \cite{ben2010} in order to show that when all the matrices $M,Q,R,\Sigma_(0),A_i,H^i$ are \textbf{diagonal} and for a particular specification of the noise $Z$ driving the exchange rates, the Wishart model becomes equivalent to the multi-Heston approach introduced in \cite{gnoatto11}. In this particular case the dynamics of the elements of the Wishart process take the simpler form
\begin{align*}
d\Sigma_{pp}(t)=\left(\beta Q^2_{pp}+2M_{ii}\Sigma_{pp}(t)\right)dt+2Q_{pp}\sum_{k=1}^{d}\sqrt{\Sigma(t)}_{pk}dW^{\mathbb{Q}^{i}}_{kp}(t)
\end{align*}
and
\begin{align*}
d\Sigma_{pm}(t)=&\Sigma_{pm}(t)\left(M_{pp}+M_{mm}\right)dt+Q_{pp}\sum_{k=1}^{d}\sqrt{\Sigma(t)}_{km}dW^{\mathbb{Q}^{i}}_{kp}(t)\\
&+Q_{mm}\sum_{k=1}^{d}\sqrt{\Sigma(t)}_{pk}dW^{\mathbb{Q}^{i}}_{km}(t).
\end{align*}
We notice that
\begin{align*}
d\left\langle \Sigma_{pp}(t),\Sigma_{pp}(t) \right\rangle=4Q_{pp}\Sigma_{pp}(t)dt,
\end{align*}
so the $d$-dimensional process $\tilde{W}=\left(\tilde{W}(t)\right)_{t\geq 0}$, with
\begin{align*}
d\tilde{W}^{\mathbb{Q}^i}_p(t)=\sqrt{\Sigma_{pp}}\sum_{k=1}^{d}\sqrt{\Sigma(t)}_{pk}dW^{\mathbb{Q}^i}_{kp(t)}
\end{align*}
is a vector of independent Brownian motions. As a consequence, we may conveniently express the dynamics of a generic diagonal element as
\begin{align*}
d\Sigma_{pp}(t)=\left(\beta Q^2_{pp}+2M_{ii}\Sigma_{pp}(t)\right)dt+2Q_{pp}\sqrt{\Sigma_{pp}(t)}d\tilde{W}^{\mathbb{Q}^i}_p(t).
\end{align*}

Assume also that the matrix  $Z^{\mathbb{Q}^i}=\left(Z^{\mathbb{Q}^{i}}(t)\right)_{t\geq 0}$ is diagonal\footnote{Note that $Z$ is no more a matrix Brownian motion, which by definition consists in $d\times d$ independent Brownian motions.}.  A quick inspection at the diffusion part of the dynamics of the exchange rate $S^{i,j}$ reveals that we are considering a setting that is equivalent to the multi-Heston model of \cite{gnoatto11}, namely, we may conveniently rewrite the dynamics 
\begin{align*}
\frac{dS^{i,j}(t)}{S^{i,j}(t)}&=(r^{i}(t)-r^{j}(t))dt+Tr\left[(A_i -A_j)\sqrt{\Sigma(t)}dZ^{\mathbb{Q}^i}(t)\right]
\end{align*}
as
\begin{align*}
\frac{dS^{i,j}(t)}{S^{i,j}(t)}&=(r^{i}(t)-r^{j}(t))dt+({\bf a}_i -{\bf a}_j)^{\top}\sqrt{Diag\left(\Sigma(t)\right)}d\tilde Z^{\mathbb{Q}^i}(t),
\end{align*}
where ${\bf a}_i,{\bf a}_j$ are column vectors, $\tilde Z^{\mathbb{Q}^i}(t)$ is a $d$-dimensional vector Brownian motion with $\tilde Z_p^{\mathbb{Q}^i}(t)= Z_{pp}^{\mathbb{Q}^i}(t)$ and $\sqrt{Diag\left(\Sigma(t)\right)}$ denotes a diagonal matrix, featuring the elements $\sqrt{\Sigma_{pp}(t)}, \ p=1,...,d$ along the main diagonal. The short rate processes may be equivalently expressed by means of a scalar product between a vector featuring the elements of the main diagonal of $H^i$ and a second vector featuring the elements of the main diagonal of $\Sigma$. This allows us to connect the finite dimensional distributions generated by the hybrid Wishart process with those generated by a multi-Heston model. Being the finite dimensional distributions equal, also the prices of European options will be the same under the two specifications. From this informal discussion, we also deduce that the Wishart hybrid model nests an extension of the approach of \cite{gnoatto11} featuring stochastic interest rates.

\section{Numerical illustration}\label{calibration}

\subsection{Joint calibration of the FX-IR hybrid model}\label{caliFXIR}

In this subsection we report the results of a simultaneous calibration of a foreign exchange options volatility surface and of the two yield curves of the economies linked by the foreign exchange rate. To be more specific, we consider an implied volatility surface of $EUR/USD$ and the yield curves of $EUR$ and $USD$ economies.

\subsubsection{Description of the data}

We consider market data on January 15$^{th}$ 2013. The data set features an implied volatility surface for options written on $EUR/USD$. Moneyness ranges from $5\Delta$ put up to $5\Delta$ call. Maturities range from 1 day till 15 years. The conversion between deltas and strikes can be easily performed along the lines of \cite{bookClark11} or \cite{Wystup10}. As far as the interest rate market is concerned, we obtained the yield curves for both $EUR$ and $USD$. In this case we considered maturities up to 20 years for both curves.

\subsubsection{Calibration results}

The idea of the present calibration is to try to fit simultaneously the following data: the market FX implied volatility surface of $EUR/USD$ and the $EUR$ and $USD$ yield curves. Our calibration is thought of as instrumental for the evaluation of long-dated FX products like power reverse dual currency notes (PRDC, see the introduction). For this reason, the range of expiries that we consider is much larger. As far as the yield curves are concerned, we fit market data up to 20 years. Concerning the implied volatility surface, we consider maturities ranging from 1 month up to 15 years (the longest maturity at which traders are allowed to trade). The penalty function is constructed first by considering the distance between market and model implied volatilities for FX options and then by also looking at the distance between market yields and model yields computed according to formula \eqref{ycdef}.\\

Figures \ref{fig:stoch_rate_1} and \ref{fig:stoch_rate_2} report the results of our calibration. The fit is very satisfactory across different strikes and maturities for FX options. The upper part or Figure \ref{fig:stoch_rate_1} shows the actual distance between the two surfaces, that are almost overlapping. For the sake of readability in the bottom left part of Figure \ref{fig:stoch_rate_1} we multiply by a factor $>1$ all model implied volatilities so as to ease the comparison. The bottom right part of Figure \ref{fig:stoch_rate_1} provides a more precise view on the quality of the calibration of the FX surface, by plotting for each point in the maturity/delta space the squared difference between model and market implied volatilities. Figure \ref{fig:stoch_rate_2} reports a comparison between market and model implied yield curves. Red stars and blue circles denote model and market implied yields respectively. Even in this market we are able to obtain a satisfactory fit. Recall that we perform a joint calibration procedure, meaning that the same set of model parameters allows to obtain the results of Figures \ref{fig:stoch_rate_1} and \ref{fig:stoch_rate_2}.\\

 The values of the parameters arising from the calibration are reported in Table \ref{tab:paramsWisIRFX}.
By observing the values of the parameters we obtained, we notice that for the parameter $\beta$ we have $\beta\geq d+1$, hence for the fitted model we conclude that the dollar and the euro risk-neutral measures are well posed martingale measures. As far as the values of $h_{EUR}$ and $h_{USD}$ are concerned, they are negative, which is not in line with our starting assumption. In a first calibration experiment, we imposed the constraint $h_i>0, \ i=USD,EUR$ but observed that the model was not able to replicate the observed yield curve shape. 
Recall that in the present framework we are using (projections of) the Wishart process for the fit on both the FX implied volatility surface and the yield curves. This fact results in a trade-off between the parameters of the model. In other words, a small value of the initial state variables $\Sigma$ conflicts with the ability of the model to fit the short term smile. This is in line with the findings of \cite{article_Chiarella} who also found negative values for $h$, thus implying a distribution of the short rate that can become negative. Anyway, upon relaxation of the positivity constraint we obtained the very satisfactory fit that we reported in the present section.

\section{Conclusions}\label{conclusions}
In this paper we introduced a novel hybrid model that allows a joint evaluation of interest rate products and FX derivatives. The model is based on the Wishart process and so it retains the same level of analytical tractability typical of the affine class, like in the multi-Heston version introduced in \cite{gnoatto11}. In analogy with \cite{gnoatto11}, we have a model that is theoretically consistent with the triangular relationship among FX rates and other stylized symmetries that are commonly observed in the FX market. Moreover, the model allows for interest rate risk since there is the presence of stochastic interest rates that are non trivially correlated with the volatility structure of the exchange rates. A successful calibration of the $EUR/USD$ implied volatility surface and of the $EUR$ and $USD$ yield-curves suggests that the framework we propose is a suitable instrument for the evaluation of long-dated FX products, where a joint description of interest rates and FX markets is required.\\

There is ample room for future research. For example, the introduction of $S_d^+$ valued jump processes, as in e.g. \cite{article_MPS02}, would help in capturing the highly skewed implied volatilities that are usually observed for short maturities. Another interesting direction is the generalization to the multi-curve setting that is emerging in interest rate modelling as a response to the recent financial crisis. Several authors have attempted a solution in the FX context in order to model different levels of risks between e.g. EONIA and EURIBOR curves (see e.g. \cite{bia10}, \cite{fries10}, \cite{ken10}, \cite{kitawo09} and references therein). We believe that a Wishart based approach would give very interesting insights in this multivariate puzzle.

\section{Proofs}\label{proofs2}
\subsection{Proof of Proposition \ref{prop_6}}
Application of the Ito formula to the product $S^{i,l}(t)S^{l,j}(t)$ (using the property $Tr[dWA]Tr[dWB]=Tr[AB]$) leads to (\ref{Wrates}). In formulas:
\begin{align*}
\frac{dS^{i,l}(t)}{S^{i,l}(t)}&=(r^i-r^l)dt+Tr[(A_i -A_l)\Sigma(t)A_i]dt+Tr[(A_i -A_l)\sqrt{\Sigma(t)}dZ(t)],\nonumber\\
\frac{dS^{l,j}(t)}{S^{l,j}(t)}&=(r^l-r^j)dt+Tr[(A_l -A_j)\Sigma(t)A_l]dt+Tr[(A_l -A_j)\sqrt{\Sigma(t)}dZ(t)],\nonumber
\end{align*}
\begin{align*}
dS^{i,j}(t)=& dS^{i,l}(t)_{t}S^{l,j}_{t}+S^{i,l}(t)_{t}dS^{l,j}_{t}+d\left\langle S^{i,l},S^{l,j} \right\rangle_t\nonumber\\
=&S^{l,j}_{t}S^{i,l}(t)\Bigg((r^i-r^l)dt+Tr[(A_i -A_l)\Sigma(t)A_i]dt+Tr[(A_i -A_l)\sqrt{\Sigma(t)}dZ(t)]\Bigg)\nonumber\\
&+S^{l,j}_{t}S^{i,l}(t)\Bigg((r^l-r^j)dt+Tr[(A_l -A_j)\Sigma(t)A_l]dt+Tr[(A_l -A_j)\sqrt{\Sigma(t)}dZ(t)]\Bigg)\nonumber\\
&+d\left\langle \int_0^.Tr[(A_i -A_l)\sqrt{\Sigma(u)}dZ(u)],\int_0^.Tr[(A_l -A_j)\sqrt{\Sigma(u)}dZ(u)] \right\rangle_t.
\end{align*}
We concentrate on the covariation term:
\begin{align*}
&d\left\langle \int_0^.Tr[(A_i -A_l)\sqrt{\Sigma(v)}dZ(v)],\int_0^.Tr[(A_l -A_j)\sqrt{\Sigma(v)}dZ(v)] \right\rangle_t\nonumber\\
=&d\left\langle \int_0^.\sum_{p,q,r=1}^{d}{(A_i -A_l)_{pq}\sqrt{\Sigma(v)}_{qr}dZ(v)_{{rp}}},\int_0^.\sum_{s,t,u=1}^{d}{(A_l -A_j)_{st}\sqrt{\Sigma(v)}_{tu}dZ(v)_{{us}}}\right\rangle_t\nonumber\\
=&\sum_{p,q,r,s,t,u=1}^{d}(A_i -A_l)_{pq}\sqrt{\Sigma(t)}_{qr}dZ(t)_{{rp}}\nonumber\\
&\times(A_l -A_j)_{st}\sqrt{\Sigma(t)}_{tu}dZ(t)_{{us}}\delta_{r=u}\delta_{p=s}\nonumber\\
=&\sum_{s,q,u,t=1}^{d}{(A_i -A_l)_{sq}\sqrt{\Sigma(t)}_{qu}(A_l -A_j)_{st}\sqrt{\Sigma(t)}_{tu}dt}\nonumber\\
=&\sum_{s,q,u,t=1}^{d}{(A_i -A_l)_{sq}\sqrt{\Sigma(t)}_{qu}\sqrt{\Sigma(t)}_{ut}(A_l^\top -A_j^\top)_{ts}dt}.
\end{align*}
By assuming that the matrices $Ai,A_l,A_j$ are symmetric we get
\begin{align*}
&d\left\langle \int_0^.Tr[(A_i -A_l)\sqrt{\Sigma(v)}dZ(v)],\int_0^.Tr[(A_l -A_j)\sqrt{\Sigma(v)}dZ(v)] \right\rangle_t\nonumber\\
&=Tr\left[(A_i -A_l)\Sigma(t)(A_l -A_j)\right].
\end{align*}
Finally, using the fact that terms in the trace commute we obtain:
\begin{align*}
\frac{dS^{i,j}(t)}{S^{i,j}(t)}&=(r^i-r^j)dt+Tr[(A_i -A_j)\Sigma(t)A_i]dt\nonumber\\
&+Tr[(A_i -A_j)\sqrt{\Sigma(t)}dZ(t)]\nonumber\\
\end{align*}
as desired.

\subsection{Proof of Proposition \ref{LT_FW}}
Recall that $\phi^{i,j}\left(\omega,t,\tau,x,\Sigma\right)=G^{i,j}(i\omega,t,\tau,x,\Sigma)$ where $G$ was previously defined as $G^{i,j}\left(\omega,t,\tau,x,\Sigma\right)=\mathbb{E}_{t}^{\mathbb{Q}^i}\left[e^{\omega x_T}\right]$, $x^{i,j}(t):=\log{S^{i,j}(t)}$. These functions represent the conditional discounted characteristic and  moment generating functions of the log-exchange rate respectively. We follow closely \cite{article_DaFonseca1} in order to determine these quantities, so we first write the PDE satisfied by $G$, which requires the dynamics of $x(t)=x^{i,j}(t)$ under the measure $\mathbb{Q}^i$:
\begin{align}
d\log{S^{i,j}(t)}&=\left(\left(r^i-r^j\right)-\frac{1}{2}Tr\left[\left(A_i-A_j\right)\Sigma(t)\left(A_i-A_j\right)\right]\right)dt\nonumber\\
&+Tr\left[\left(A_i-A_j\right)\sqrt{\Sigma(t)}dZ^{\mathbb{Q}^i}(t)\right],
\end{align}
where the short rates are driven by the Wishart process in line with \eqref{shortRates}. The Laplace transform solves the following PDE in terms of $\tau=T-t$:
\begin{align}
\frac{\partial}{\partial \tau}G^{i,j}&=\mathcal{A}_{x,\Sigma}G^{i,j}-r^iG^{i,j},\label{PDEWi}\\
G^{i,j}(\omega,T,0,x,\Sigma)&=e^{\omega x}.
\end{align}
To solve this PDE we first determine the infinitesimal generator $\mathcal{A}_{x,\Sigma}$. This will feature the contribution of three terms: the process $x$, the Wishart process $\Sigma$ and the mixed term that corresponds to the coefficient of the term $\frac{\partial^2}{\partial x\partial \Sigma_{pt}}$ and arises from the correlation structure. The first is trivial, the second is known thanks to \cite{article_Bru} and is of the form:
\begin{align*}
Tr\left[\left(\Omega\Omega^{\top}+\tilde{M}\Sigma+\Sigma \tilde{M}^{\top}\right)D+2\Sigma DQ^{\top}QD\right],
\end{align*}
where $D$ is the differential operator:
\begin{align*}
D_{pt}=\frac{\partial}{\partial \Sigma_{pt}}.
\end{align*}
In order to compute the mixed term, we notice that under the measure $\mathbb{Q}^i$:
\begin{align}
&d\left\langle \log{S^{i,j}},\Sigma_{pt}\right\rangle_u\nonumber\\
&=2d\left\langle\int_0^. Tr\left[\left(A_i-A_j\right)\sqrt{\Sigma(v)}dZ^{\mathbb{Q}^i}(v)\right],\int_0^.\sum_{q,r,s=1}^{d}{\sqrt{\Sigma(v)}_{pq}dZ_{qr}^{\mathbb{Q}^i}(v)R^{\top}_{rs}Q_{st}}\right\rangle_u\nonumber\\
&=2d\left\langle \int_0^.\sum_{a,b,c,=1}^{d}{\left(A_i-A_j\right)_{ab}\sqrt{\Sigma(v)}_{bc}dZ_{ca}^{\mathbb{Q}^i}(v)},\int_0^.\sum_{q,r,s=1}^{d}{\sqrt{\Sigma}_{pq}(v)dZ_{qr}^{\mathbb{Q}^i}(v)R^{\top}_{rs}Q_{st}}\right\rangle_u\delta_{c=q}\delta_{a=r}\nonumber\\
&=2\sum_{b,q,r,s=1}^{d}{\left(A_i-A_j\right)_{rb}\sqrt{\Sigma(t)}_{bq}\sqrt{\Sigma(t)}_{pq}R^{\top}_{rs}Q_{st}}dt\nonumber\\
&=2\sum_{b,q,r,s=1}^{d}{\Sigma(t)_{pb}\left(A_i-A_j\right)_{br}R^{\top}_{rs}Q_{st}}dt,\nonumber
\end{align}
where we have used the fact that
\begin{align*}
2Tr\left[\Sigma\left(A_i-A_j\right)R^{\top}QD\right]\frac{\partial}{\partial x}=2\sum_{p,b,r,s=1}^{d}{D_{tp}\Sigma_{pb}\left(A_i-A_j\right)_{br}R^{\top}_{rs}Q_{st}}\frac{\partial}{\partial x}
\end{align*}
and that $D$ is symmetric.
Now we can state the PDE satisfied by $G$:
\begin{align}
\frac{\partial G}{\partial \tau}=&\left(\left(r^i-r^j\right)-\frac{1}{2}Tr\left[\left(A_i-A_j\right)\Sigma\left(A_i-A_j\right)\right]\right)\frac{\partial G}{\partial x}\nonumber\\
&+\frac{1}{2}Tr\left[\left(A_i-A_j\right)\Sigma\left(A_i-A_j\right)\right]\frac{\partial^2 G}{\partial x^2}\nonumber\\
&+Tr\left[\left(\Omega\Omega^{\top}+\tilde{M}\Sigma+\Sigma \tilde{M}^{\top}\right)DG+2\left(\Sigma DQ^{\top}QD\right)G\right]\nonumber\\
&+2Tr\left[\Sigma\left(A_i-A_j\right)R^{\top}QD\right]\frac{\partial G}{\partial x}-h^i-Tr\left[H^i\Sigma\right].\label{PDE_SV}
\end{align}
Since the Wishart process $\Sigma$ is affine we guess the following
\begin{align}
G^{i,j}\left(\omega,t,\tau,x,\Sigma\right)=exp\left[\mathcal{C}(\tau)x+\mathcal{A}(\tau)+Tr\left[\mathcal{B}(\tau)\Sigma\right]\right],\label{guess2}
\end{align}
with $\mathcal{C},\mathcal{A}\in\mathbb{R}$ and $\mathcal{B}\in\mathcal{S}_d$ s.t. the transform is well defined, moreover these functions satisfy the following terminal conditions:
\begin{align*}
\mathcal{A}(0)&=0\in\mathbb{R},\\
\mathcal{C}(0)&=\omega\in\mathbb{R},\\
\mathcal{B}(0)&=0\in S_d.
\end{align*}
We substitute the candidate \eqref{guess2} into \eqref{PDE_SV} and obtain
\begin{align*}
\frac{\partial}{\partial \tau}\mathcal{C}(\tau)=0,
\end{align*}
hence: $\mathcal{C}(\tau)=\omega$ $\forall\tau$. We also have the following (matrix) Riccati ODE:
\begin{align}
\frac{\partial}{\partial \tau}\mathcal{B}&=\mathcal{B}(\tau)\left(\tilde{M}+\omega Q^\top R\left(A_i-A_j\right)\right)+\left(\tilde{M}^{\top}+\omega\left(A_i-A_j\right)R^{\top}Q\right)\mathcal{B}(\tau)\label{matRic}\nonumber\\
&+2\mathcal{B}(\tau)Q^{\top}Q\mathcal{B}(\tau)+\frac{\omega^2-\omega}{2}\left(A_i-A_j\right)^2+(\omega-1)H^i-\omega H^j
\end{align}
and the final ODE which may then be solved upon direct integration
\begin{align}
\frac{\partial}{\partial \tau}\mathcal{A}&=\omega\left(h^i-h^j\right)-h^i+Tr\left[\Omega\Omega^{\top}\mathcal{B}(\tau)\right].\label{ODE_for_A}
\end{align}
Following \cite{gra08}, it is possible to linearize \eqref{matRic} by writing
\begin{align}
\mathcal{B}(\tau)=\mathcal{F}^{-1}(\tau)\mathcal{G}(\tau),\label{linear}
\end{align}
for $\mathcal{F}(\tau)\in GL(d)$ and $\mathcal{G}(\tau)\in M_d$, then we have
\begin{align}
\frac{\partial}{\partial \tau}\mathcal{F}&=-\mathcal{F}(\tau)\left(\tilde{M}^{\top}+\omega\left(A_i-A_j\right)R^{\top}Q\right)-2\mathcal{G}(\tau)Q^{\top}Q\label{matF}\\
\frac{\partial}{\partial \tau}\mathcal{G}&=\mathcal{G}(\tau)\left(\tilde{M}+\omega Q^\top R\left(A_i-A_j\right)\right)\nonumber\\
&+\mathcal{F}(\tau)\left(\frac{\omega^2-\omega}{2}\left(A_i-A_j\right)^2+(\omega-1)H^i-\omega H^j\right),
\end{align}
with $\mathcal{F}(0)=I_d$ and $\mathcal{G}(0)=\mathcal{B}(0)$. The solution of the above system is
\begin{align*}
&\left(\mathcal{B}(0),I_d\right)
\left(\begin{array}{cc}
	\mathcal{B}_{11}(\tau)&\mathcal{B}_{12}(\tau)\\
	\mathcal{B}_{21}(\tau)&\mathcal{B}_{22}(\tau)
\end{array}\right)=\left(\mathcal{B}(0),I_d\right)\nonumber\\
&\times\exp \tau\left[\begin{array}{cc}
	\tilde{M}+\omega Q^\top R\left(A_i-A_j\right)&-2Q^{\top}Q\\
	\frac{\omega^2-\omega}{2}\left(A_i-A_j\right)^2+(\omega-1)H^i-\omega H^j&-\left(\tilde{M}^{\top}+\omega\left(A_i-A_j\right)R^{\top}Q\right)
\end{array}
\right]
\end{align*}
so that the solution for $\mathcal{B}(\tau)$ is
\begin{align*}
\mathcal{B}(\tau)=\left(\mathcal{B}(0)\mathcal{B}_{12}(\tau)+\mathcal{B}_{22}(\tau)\right)^{-1}\left(\mathcal{B}(0)\mathcal{B}_{11}(\tau)+\mathcal{B}_{21}(\tau)\right).
\end{align*}
Since $\mathcal{B}(0)=0$ we finally get $\mathcal{B}(\tau)=\mathcal{B}_{22}(\tau)^{-1}\mathcal{B}_{21}(\tau)$. Now starting  from \eqref{matF} we write
\begin{align*}
-\frac{1}{2}\left(\frac{\partial}{\partial \tau}\mathcal{F}+\mathcal{F}(\tau)\left(\tilde{M}^{\top}+\omega\left(A_i-A_j\right)R^{\top}Q\right)\right)\left(Q^{\top}Q\right)^{-1}&=\mathcal{G}(\tau).
\end{align*}
We plug this into \eqref{linear}, then we insert the resulting formula for $\mathcal{B}(\tau)$ into \eqref{ODE_for_A} and obtain
\begin{align}
\frac{\partial}{\partial \tau}\mathcal{A}&=\omega\left(h^i-h^j\right)-h^i\nonumber\\
&+Tr\left[-\frac{\beta}{2}\left(\mathcal{F}^{-1}(\tau)\frac{\partial}{\partial \tau}\mathcal{F}+\left(\tilde{M}^{\top}+\omega\left(A_i-A_j\right)R^{\top}Q\right)\right)\right]\nonumber
\end{align}
whose solution is
\begin{align*}
\mathcal{A}=\left(\omega\left(h^i-h^j\right)-h^i\right)\tau-\frac{\beta}{2}Tr\left[\log\mathcal{F}(\tau)+\left(\tilde{M}^{\top}+\omega\left(A_i-A_j\right)R^{\top}Q\right)\tau\right].
\end{align*}

\subsection{Proof of Proposition \ref{exp_Wis1}}
The matrix Riccati ODE \eqref{matRic} may be rewritten as follows after replacing $Q$ by a small perturbation $\alpha Q, \alpha\in\mathbb{R}$:
\begin{align}
\frac{\partial}{\partial \tau}\mathcal{B}=&\mathcal{B}(\tau)\left(\tilde{M}+\alpha\omega Q^{\top}R\left(A_i-A_j\right)\right)+\left(\tilde{M}^{\top}+\omega\alpha\left(A_i-A_j\right)R^{\top}Q\right)\mathcal{B}(\tau)\nonumber\\
&+2\alpha^2\mathcal{B}(\tau)Q^{\top}Q\mathcal{B}(\tau)+\frac{\omega^2-\omega}{2}\left(A_i-A_j\right)^2\\
\mathcal{B}(0)=&0.
\end{align}
We consider now an expansion in terms of $\alpha$ of the form $\mathcal{B}=\mathcal{B}^0+\alpha \mathcal{B}^1+\alpha^2 \mathcal{B}^2$. We substitute this expansion and identify terms by powers of $\alpha$. We obtain the following ODE's.
\begin{align}
\frac{\partial}{\partial \tau}\mathcal{B}^0=&\mathcal{B}^0(\tau)\tilde{M}+\tilde{M}^{\top}\mathcal{B}^0(\tau)+\frac{\omega^2-\omega}{2}\left(A_i-A_j\right)^2\\
\frac{\partial}{\partial \tau}\mathcal{B}^1=&\mathcal{B}^1(\tau)\tilde{M}+\tilde{M}^{\top}\mathcal{B}^1(\tau)+\mathcal{B}^0(\tau)Q^{\top}R\left(A_i-A_j\right)\omega\nonumber\\
&+\omega\left(A_i-A_j\right)R^{\top}Q\mathcal{B}^0(\tau)\\
\frac{\partial}{\partial \tau}\mathcal{B}^2=&\mathcal{B}^2(\tau)\tilde{M}+\tilde{M}^{\top}\mathcal{B}^2(\tau)+\mathcal{B}^1(\tau)Q^{\top}R\left(A_i-A_j\right)\omega\nonumber\\
&+\omega\left(A_i-A_j\right)R^{\top}Q\mathcal{B}^1(\tau)+2\mathcal{B}^0(\tau)Q^{\top}Q\mathcal{B}^0(\tau).
\end{align}
Let $\gamma:=\frac{\omega^2-\omega}{2}$ then these equations admit the following solutions
\begin{align}
\mathcal{B}^0(\tau)=&\frac{\omega^2-\omega}{2}\int_{0}^{\tau}{e^{\left(\tau-u\right)\tilde{M}^{\top}}\left(A_i-A_j\right)^2e^{\left(\tau-u\right)\tilde{M}}du}\nonumber\\
:=&\gamma \tilde{\mathcal{B}}^0(\tau),\\
\mathcal{B}^1(\tau)=&\gamma\omega\int_{0}^{\tau}e^{\left(\tau-u\right)\tilde{M}^{\top}}\left(\tilde{\mathcal{B}}^0(u)Q^{\top}R\left(A_i-A_j\right)\right.\nonumber\\
&\left.+\left(A_i-A_j\right)R^{\top}Q\tilde{\mathcal{B}}^0(u)\right)e^{\left(\tau-u\right)\tilde{M}}du\nonumber\\
:=&\gamma\omega\tilde{\mathcal{B}}^1(\tau),\\
\mathcal{B}^2(\tau)=&\gamma\omega^2\int_{0}^{\tau}e^{\left(\tau-u\right)\tilde{M}^{\top}}\left(\tilde{\mathcal{B}}^1(u)Q^{\top}R\left(A_i-A_j\right)\right.\nonumber\\
&\left.+\left(A_i-A_j\right)R^{\top}Q\tilde{\mathcal{B}}^1(u)\right)e^{\left(\tau-u\right)\tilde{M}}du\nonumber\\
&+\gamma^2\int_{0}^{\tau}e^{\left(\tau-u\right)\tilde{M}^{\top}}2\tilde{\mathcal{B}}^0(u)Q^{\top}Q\tilde{\mathcal{B}}^0(u)e^{\left(\tau-u\right)\tilde{M}}du\nonumber\\
:=&\gamma^2\tilde{\mathcal{B}}^{20}(\tau)+\gamma\omega^2\tilde{\mathcal{B}}^{21}(\tau).
\end{align}
whereby we implicitly defined the matrices $\tilde{\mathcal{B}}^0(\tau),\tilde{\mathcal{B}}^1(\tau),\tilde{\mathcal{B}}^{20}(\tau),\tilde{\mathcal{B}}^{21}(\tau)$. We can now write the function $\mathcal{B}(\tau)$ as follows
\begin{align}
\mathcal{B}(\tau)=\gamma \tilde{\mathcal{B}}^0(\tau)+\alpha\gamma\omega\tilde{\mathcal{B}}^1(\tau)+\alpha^2\gamma^2\tilde{\mathcal{B}}^{20}(\tau)+\alpha^2\gamma\omega^2\tilde{\mathcal{B}}^{21}(\tau).\label{tilde_B}
\end{align}
A direct substitution of \eqref{tilde_B} into \eqref{ODE_for_A} allows us to express the function $\mathcal{A}(\tau)$ as
\begin{align}
\mathcal{A}(\tau)=&\omega\left(r_i-r_j\right)\tau+\gamma Tr\left[\Omega\Omega^{\top}\int_{0}^{\tau}{\tilde{\mathcal{B}}^0(u)du}\right]+\alpha\gamma\omega Tr\left[\Omega\Omega^{\top}\int_{0}^{\tau}{\tilde{\mathcal{B}}^1(u)du}\right]\nonumber\\
&+\alpha^2\gamma^2Tr\left[\Omega\Omega^{\top}\int_{0}^{\tau}{\tilde{\mathcal{B}}^{20}(u)du}\right]+\alpha^2\gamma\omega^2Tr\left[\Omega\Omega^{\top}\int_{0}^{\tau}{\tilde{\mathcal{B}}^{21}(u)du}\right]\nonumber\\
=&\omega\left(r_i-r_j\right)\tau+\gamma\tilde{\mathcal{A}}^0(\tau)+\alpha\gamma\omega\tilde{\mathcal{A}}^1(\tau)\nonumber\\
&+\alpha^2\gamma^2\tilde{\mathcal{A}}^{20}(\tau)+\alpha^2\gamma\omega^2\tilde{\mathcal{A}}^{21}(\tau),
\end{align}
having again implicitly defined the functions $\tilde{\mathcal{A}}^0(\tau),\tilde{\mathcal{A}}^1(\tau),\tilde{\mathcal{A}}^{20}(\tau),\tilde{\mathcal{A}}^{21}(\tau)$.
We consider now the pricing in terms of the Fourier transform, i.e. $\omega=i\lambda$, as in \eqref{price2}. Let $\mathcal{Z}$ denote the strip of regularity of the payoff. A Taylor-McLaurin expansion w.r.t. $\alpha$ gives the following:
\begin{align}\label{price_expansion}
C(S(t),K,\tau)\approx&\frac{e^{-r_i\tau}}{2\pi}\int_{\mathcal{Z}}{e^{\mathtt{i}\lambda\left(r_i-r_j\right)\tau+\mathtt{i}\lambda x+\gamma\left(\tilde{\mathcal{A}}^0(\tau)+Tr\left[\tilde{\mathcal{B}}^0(\tau)\Sigma\right]\right)}\Phi(\lambda)d\lambda}\nonumber\\
&+\alpha\left(\tilde{\mathcal{A}}^1(\tau)+Tr\left[\tilde{\mathcal{B}}^1(\tau)\Sigma\right]\right)\nonumber\\
&\frac{e^{-r_i\tau}}{2\pi}\int_{\mathcal{Z}}{\gamma\mathtt{i}\lambda e^{\mathtt{i}\lambda\left(r_i-r_j\right)\tau+\mathtt{i}\lambda x+\gamma\left(\tilde{\mathcal{A}}^0(\tau)+Tr\left[\tilde{\mathcal{B}}^0(\tau)\Sigma\right]\right)}\Phi(\lambda)d\lambda}\nonumber\\
&+\alpha^2\left(\tilde{\mathcal{A}}^{20}(\tau)+Tr\left[\tilde{\mathcal{B}}^{20}(\tau)\Sigma\right]\right)\nonumber\\
&\times\frac{e^{-r_i\tau}}{2\pi}\int_{\mathcal{Z}}{\gamma^2 e^{\mathtt{i}\lambda\left(r_i-r_j\right)\tau+\mathtt{i}\lambda x+\gamma\left(\tilde{\mathcal{A}}^0(\tau)+Tr\left[\tilde{\mathcal{B}}^0(\tau)\Sigma\right]\right)}\Phi(\lambda)d\lambda}\nonumber\\
&+\alpha^2\left(\tilde{\mathcal{A}}^{21}(\tau)+Tr\left[\tilde{\mathcal{B}}^{21}(\tau)\Sigma\right]\right)\nonumber\\
&\times\frac{e^{-r_i\tau}}{2\pi}\int_{\mathcal{Z}}{\gamma\mathtt{i}\lambda^2 e^{\mathtt{i}\lambda\left(r_i-r_j\right)\tau+\mathtt{i}\lambda x+\gamma\left(\tilde{\mathcal{A}}^0(\tau)+Tr\left[\tilde{\mathcal{B}}^0(\tau)\Sigma\right]\right)}\Phi(\lambda)d\lambda}\nonumber\\
&+\frac{\alpha^2}{2}\left(\tilde{\mathcal{A}}^{1}(\tau)+Tr\left[\tilde{\mathcal{B}}^{1}(\tau)\Sigma\right]\right)^2\nonumber\\
&\times\frac{e^{-r_i\tau}}{2\pi}\int_{\mathcal{Z}}{\gamma^2\mathtt{i}^2\lambda^2 e^{\mathtt{i}\lambda\left(r_i-r_j\right)\tau+\mathtt{i}\lambda x+\gamma\left(\tilde{\mathcal{A}}^0(\tau)+Tr\left[\tilde{\mathcal{B}}^0(\tau)\Sigma\right]\right)}\Phi(\lambda)d\lambda}.\nonumber
\end{align}
Recall now from \eqref{int_var_2} the definition of the integrated Black-Scholes variance. In the previous formula in the first term we recognise the Black Scholes price in terms of the characteristic function when the integrated variance is $v=\sigma^2\tau$:
\begin{align}
C_{B\&S}\left(S(t),K,\sigma,\tau\right)=\frac{e^{-r_i\tau}}{2\pi}\int_{\mathcal{Z}}{e^{\mathtt{i}\lambda\left(r_i-r_j\right)\tau+i\lambda x+\frac{(i\lambda)^2-i\lambda}{2}v}\Phi(\lambda)d\lambda},
\end{align}
so that the price expansion is of the form:
\begin{align}
C(S(t),K,\tau)\approx &C_{B\&S}\left(S(t),K,\sigma,\tau\right)\nonumber\\
&+\alpha\left(\tilde{\mathcal{A}}^1(\tau)+Tr\left[\tilde{\mathcal{B}}^1(\tau)\Sigma\right]\right)\partial^2_{xv}C_{B\&S}\left(S(t),K,\sigma,\tau\right)\nonumber\\
&+\alpha^2\left(\tilde{\mathcal{A}}^{20}(\tau)+Tr\left[\tilde{\mathcal{B}}^{20}(\tau)\Sigma\right]\right)\partial^2_{v^2}C_{B\&S}\left(S(t),K,\sigma,\tau\right)\nonumber\\
&+\alpha^2\left(\tilde{\mathcal{A}}^{21}(\tau)+Tr\left[\tilde{\mathcal{B}}^{21}(\tau)\Sigma\right]\right)\partial^3_{x^2v}C_{B\&S}\left(S(t),K,\sigma,\tau\right)\nonumber\\
&+\frac{\alpha^2}{2}\left(\tilde{\mathcal{A}}^{1}(\tau)+Tr\left[\tilde{\mathcal{B}}^{1}(\tau)\Sigma\right]\right)^2\partial^4_{x^2v^2}C_{B\&S}\left(S(t),K,\sigma,\tau\right),\label{price_expansion_2}
\end{align}
which completes the proof.

\subsection{Proof of Proposition \ref{exp_Wis2}}
We follow the procedure in \cite{dafgra11}. We suppose an expansion for the integrated implied variance of the form $v=\sigma_{imp}^2\tau=\zeta_0+\alpha\zeta_1+\alpha^2\zeta_2$ and we consider the Black Scholes formula as a function of the integrated implied variance and the log exchange rate $x=\log S$: $C_{B\&S}(S(t),K,\sigma,\tau)=C_{B\&S}(x(t),K,\sigma^2_{imp}\tau,\tau)$. A Taylor-McLaurin expansion gives us the following:
\begin{align}
C_{B\&S}(x(t),K,\sigma^2_{imp}\tau,\tau)&=C_{B\&S}(x(t),K,\zeta_0,\tau)+\alpha\zeta_1\partial_{v}C_{B\&S}(x(t),K,\zeta_0,\tau)\nonumber\\
&+\frac{\alpha^2}{2}\left(2\zeta_2\partial_v C_{B\&S}(x(t),K,\zeta_0,\tau)\right.\nonumber\\
&\left.+\zeta_1^2\partial^2_{v^2}C_{B\&S}(x(t),K,\zeta_0,\tau)\right).
\end{align}
By comparing this with the price expansion \eqref{price_expansion_2} we deduce that the coefficients must be of the form:
\begin{align}
\zeta_0&=v_0\\
\zeta_1&=\frac{\left(\tilde{\mathcal{A}}^1(\tau)+Tr\left[\tilde{\mathcal{B}}^1(\tau)\Sigma\right]\right)\partial^2_{xv}C_{B\&S}}{\partial_v C_{B\&S}}\label{vol_e_1}\\
\zeta_2&=\frac{-\zeta_1^2\partial^2_{\xi^2} C_{B\&S}+2\left(\tilde{\mathcal{A}}^{20}(\tau)+Tr\left[\tilde{\mathcal{B}}^{20}(\tau)\Sigma\right]\right)\partial^2_{v^2}C_{B\&S}}{2\partial_v C_{B\&S}}\nonumber\\
&+\frac{2\left(\tilde{\mathcal{A}}^{21}(\tau)+Tr\left[\tilde{\mathcal{B}}^{21}(\tau)\Sigma\right]\right)\partial^3_{x^2v}C_{B\&S}}{2\partial_v C_{B\&S}}\nonumber\\
&+\frac{\left(\tilde{\mathcal{A}}^{1}(\tau)+Tr\left[\tilde{\mathcal{B}}^{1}(\tau)\Sigma\right]\right)^2\partial^4_{x^2v^2}C_{B\&S}}{2\partial_v C_{B\&S}},\label{vol_e_3}
\end{align}
where the Black Scholes formula $C_{B\&S}(x(t),K,\sigma_{imp}^2\tau,\tau)$ is evaluated at the point $\left(x,K,v_0,\tau\right)$. In order to find the values of $\zeta_1,\zeta_2$, we differentiate \eqref{B0}-\eqref{B21} thus obtaining the following ODE's:
\begin{align}
\frac{\partial}{\partial \tau}\tilde{\mathcal{B}}^0&=\tilde{\mathcal{B}}^0(\tau)\tilde{M}+\tilde{M}^{\top}\tilde{\mathcal{B}}^0(\tau)+\left(A_i-A_j\right)^2,\\
\frac{\partial}{\partial \tau}\tilde{\mathcal{B}}^1&=\tilde{\mathcal{B}}^1(\tau)\tilde{M}+\tilde{M}^{\top}\tilde{\mathcal{B}}^1(\tau)\nonumber\\
&+\tilde{\mathcal{B}}^0(\tau)Q^{\top}R\left(A_i-A_j\right)+\left(A_i-A_j\right)R^{\top}Q\tilde{\mathcal{B}}^0(\tau),\\
\frac{\partial}{\partial\tau}\tilde{\mathcal{B}}^{20}&=\tilde{\mathcal{B}}^{20}(\tau)\tilde{M}+\tilde{M}^{\top}\tilde{\mathcal{B}}^{20}(\tau)+2\tilde{\mathcal{B}}^{0}(\tau)Q^{\top}Q\tilde{\mathcal{B}}^{0}(\tau),\\
\frac{\partial}{\partial\tau}\tilde{\mathcal{B}}^{21}&=\tilde{\mathcal{B}}^{21}(\tau)\tilde{M}+\tilde{M}^{\top}\tilde{\mathcal{B}}^{21}(\tau)\nonumber\\
&+\tilde{\mathcal{B}}^1(\tau)Q^{\top}R\left(A_i-A_j\right)+\left(A_i-A_j\right)R^{\top}Q\tilde{\mathcal{B}}^1(\tau).
\end{align}
We consider a Taylor-McLaurin expansion in terms of $\tau$
\begin{align}
\tilde{\mathcal{B}}^0(\tau)&\approx\left(A_i-A_j\right)^2\tau+\frac{\tau^2}{2}\left(\left(A_i-A_j\right)^2\tilde{M}+\tilde{M}^{\top}\left(A_i-A_j\right)^2\right)\\
\tilde{\mathcal{B}}^1(\tau)&\approx\frac{\tau^2}{2}\left[\left(A_i-A_j\right)^2Q^{\top}R\left(A_i-A_j\right)+\left(A_i-A_j\right)R^{\top}Q\left(A_i-A_j\right)^2\right]\nonumber\\
&+\frac{\tau^3}{6}\Big(\left[\left(A_i-A_j\right)^2Q^{\top}R\left(A_i-A_j\right)+\left(A_i-A_j\right)R^{\top}Q\left(A_i-A_j\right)^2\right]\tilde{M}\Big.\nonumber\\
&+\Big.\tilde{M}^{\top}\left[\left(A_i-A_j\right)^2Q^{\top}R\left(A_i-A_j\right)+\left(A_i-A_j\right)R^{\top}Q\left(A_i-A_j\right)^2\right]\Big)\nonumber\\
&+\frac{\tau^3}{6}\Big(\left[\left(A_i-A_j\right)^2\tilde{M}+\tilde{M}^{\top}\left(A_i-A_j\right)^2\right]Q^{\top}R\left(A_i-A_j\right)\Big.\nonumber\\
&+\Big.\left(A_i-A_j\right)R^{\top}Q\left[\left(A_i-A_j\right)^2\tilde{M}+\tilde{M}^{\top}\left(A_i-A_j\right)^2\right]\Big)\\
\tilde{\mathcal{B}}^{20}(\tau)&\approx\frac{\tau^3}{6}4\left(A_i-A_j\right)^2Q^{\top}Q\left(A_i-A_j\right)^2\\
\tilde{\mathcal{B}}^{21}(\tau)&\approx\frac{\tau^3}{6}\Big(\left[\left(A_i-A_j\right)^2Q^{\top}R\left(A_i-A_j\right)+\left(A_i-A_j\right)R^{\top}Q\left(A_i-A_j\right)^2\right]\Big.\nonumber\\
&\times Q^{\top}R\left(A_i-A_j\right)+\left(A_i-A_j\right)R^{\top}Q\times\nonumber\\
&\Big.\left[\left(A_i-A_j\right)^2Q^{\top}R\left(A_i-A_j\right)+\left(A_i-A_j\right)R^{\top}Q\left(A_i-A_j\right)^2\right]\Big).
\end{align}

Noticing from \eqref{mathcal_A0}-\eqref{mathcal_A21} that $\tilde{\mathcal{A}}^{i}(\tau)$ are one order in $\tau$ higher than the corresponding $\tilde{\mathcal{B}}^{i}(\tau)$, the following approximations hold:
\begin{align}
\tilde{\mathcal{A}}^0(\tau)+Tr\left[\tilde{\mathcal{B}}^0(\tau)\Sigma\right]&=Tr\left[\left(A_i-A_j\right)^2\Sigma\right]\tau+o(\tau)\label{start}\\
\tilde{\mathcal{A}}^1(\tau)+Tr\left[\tilde{\mathcal{B}}^1(\tau)\Sigma\right]&=Tr\left[\left(A_i-A_j\right)^2Q^{\top}R\left(A_i-A_j\right)\Sigma\right]\tau^2+o(\tau^2)\\
\tilde{\mathcal{A}}^{20}(\tau)+Tr\left[\tilde{\mathcal{B}}^{20}(\tau)\Sigma\right]&=\frac{2}{3}Tr\left[\left(A_i-A_j\right)^2Q^{\top}Q\left(A_i-A_j\right)^2\right]\tau^3+o(\tau^3)\\
\tilde{\mathcal{A}}^{21}(\tau)+Tr\left[\tilde{\mathcal{B}}^{21}(\tau)\Sigma\right]&=Tr\Big[\left(\left(A_i-A_j\right)^2Q^{\top}R\left(A_i-A_j\right)\right.\Big.\nonumber\\
&+\Big.\left.\left(A_i-A_j\right)R^{\top}Q\left(A_i-A_j\right)^2\right)\Big.\nonumber\\
&\Big.\times Q^{\top}R\left(A_i-A_j\right)\Sigma\Big]\frac{\tau^3}{3}+o(\tau^3).
\end{align}
We introduce two variables: the log-moneyness $m_f=\log\left(\frac{S^{i,j}(t)e^{(r_i-r_j)\tau}}{K}\right)$ and the variance $V=Tr\left[\left(A_i-A_j\right)\Sigma(t)\left(A_i-A_j\right)\right]\tau$. 
Then, from \cite{lewis2000}, we consider the following ratios among the derivatives of the Black-Scholes formula:
\begin{align}
\frac{\partial^2_{x,v}C_{B\&S}\left(x,K,V,\tau\right)}{\partial_{v}C_{B\&S}\left(x,K,V,\tau\right)}&=\frac{1}{2}+\frac{m_f}{V};\\
\frac{\partial^2_{v^2}C_{B\&S}\left(x,K,V,\tau\right)}{\partial_{v}C_{B\&S}\left(x,K,V,\tau\right)}&=\frac{m_f^2}{2V^2}-\frac{1}{2V}-\frac{1}{8};\\
\frac{\partial^3_{x^2,v}C_{B\&S}\left(x,K,V,\tau\right)}{\partial_{v}C_{B\&S}\left(x,K,V,\tau\right)}&=\frac{1}{4}+\frac{m_f-1}{V}+\frac{m_f^2}{V^2};\\
\frac{\partial^4_{x^2,v^2}C_{B\&S}\left(x,K,V,\tau\right)}{\partial_{v}C_{B\&S}\left(x,K,V,\tau\right)}&=\frac{m_f^4}{2V^4}+\frac{m_f^2\left(m_f-1\right)}{2V^3}.\label{end}
\end{align}
Upon substitution of \eqref{start}-\eqref{end} into \eqref{vol_e_1}-\eqref{vol_e_3} we obtain the values for $\zeta_i,i=0,1,2$ allowing us to express the expansion of the implied volatility.
\begin{align}
\zeta_0&=Tr\left[\left(A_i-A_j\right)\Sigma(t)\left(A_i-A_j\right)\right]\tau,\\
\zeta_1&=\frac{Tr\left[\left(A_i-A_j\right)^2Q^{\top}R\left(A_i-A_j\right)\Sigma(t)\right]m_f\tau}{Tr\left[\left(A_i-A_j\right)\Sigma(t)\left(A_i-A_j\right)\right]},\\
\zeta_3&=\frac{m_f^2}{Tr\left[\left(A_i-A_j\right)\Sigma(t)\left(A_i-A_j\right)\right]^2}\tau\Bigg[\frac{1}{3}Tr\left[\left(A_i-A_j\right)^2Q^\top Q\left(A_i-A_j\right)^2\Sigma(t)\right]\Bigg.\nonumber\\
&+\frac{1}{3}Tr\Big[\left[\left(A_i-A_j\right)^2Q^{\top}R\left(A_i-A_j\right)+\left(A_i-A_j\right)R^{\top}Q\left(A_i-A_j\right)^2\right]\Big.\nonumber\\
&\Bigg.\Big.\times Q^\top R\left(A_i-A_j\right)\Sigma(t)\Big]-\frac{5}{4}\frac{Tr\left[\left(A_i-A_j\right)^2Q^{\top}R\left(A_i-A_j\right)\Sigma(t)\right]^2}{Tr\left[\left(A_i-A_j\right)\Sigma(t)\left(A_i-A_j\right)\right]}\Bigg].
\end{align}
By plugging these expressions we obtain the result.

\newpage

\section{Images and Tables}

\begin{table}[H]
	\centering
		\begin{tabular}{crcr}
Parameter&Value&Parameter&Value\\
\hline\\
$\Sigma(t)(1,1)$&0.1688&$A_{us}(1,1)$&0.7764\\
$\Sigma(t)(1,2)$&0.1708&$A_{us}(1,2)$&0.4837\\
$\Sigma(t)(2,2)$&0.3169&$A_{us}(2,2)$&0.9639\\
$M(1,1)$&-0.5213&$A_{eur}(1,1)$&0.6679\\
$M(1,2)$&-0.3382&$A_{eur}(1,2)$&0.6277\\
$M(2,1)$&-0.4940&$A_{eur}(2,2)$&0.8520\\
$M(2,2)$&-0.4389&$h_{us}$&-0.2218\\
$Q(1,1)$&0.2184&$h_{eur}$&-0.1862\\
$Q(1,2)$&0.0957&$H_{us}(1,1)$&0.2725\\
$Q(2,1)$&0.2483&$H_{us}(1,2)$&0.0804\\
$Q(2,2)$&0.3681&$H_{us}(2,2)$&0.4726\\
$R(1,1)$&-0.5417&$H_{eur}(1,1)$&0.1841\\
$R(1,2)$&0.1899&$H_{eur}(1,2)$&0.0155\\
$R(2,1)$&-0.1170&$H_{eur}(2,2)$&0.4761\\
$R(2,2)$&-0.4834&$\beta$&3.1442\\
\hline
		\end{tabular}
	\caption{This table reports the results of the calibration of the FX-IR Wishart hybrid model}
	\label{tab:paramsWisIRFX}
\end{table}

\begin{figure}[H]
  \centering                           \subfloat{\label{fig:fig111}\includegraphics[scale=0.50]{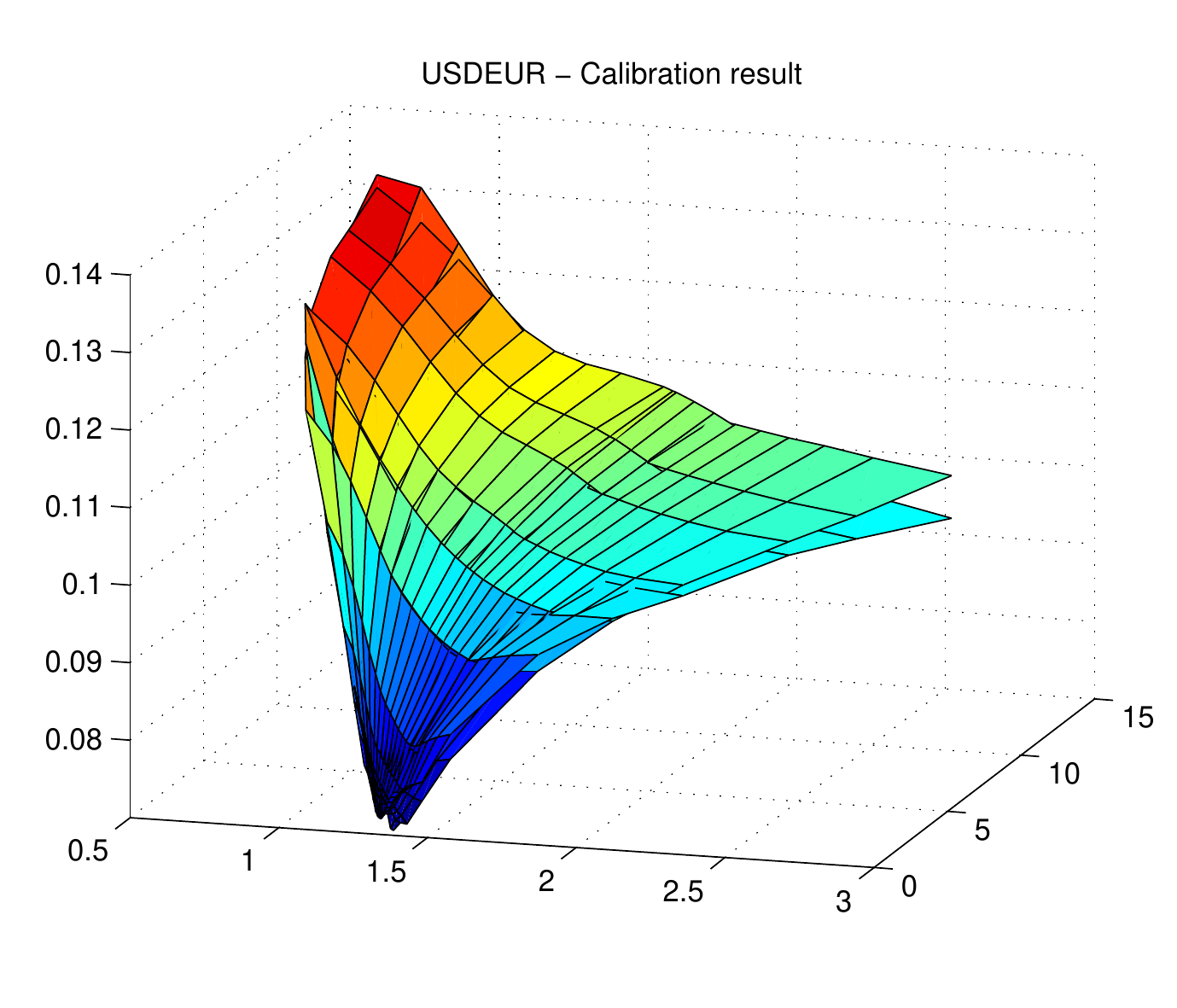}} \
  \subfloat{\label{fig:fig112}\includegraphics[scale=0.50]{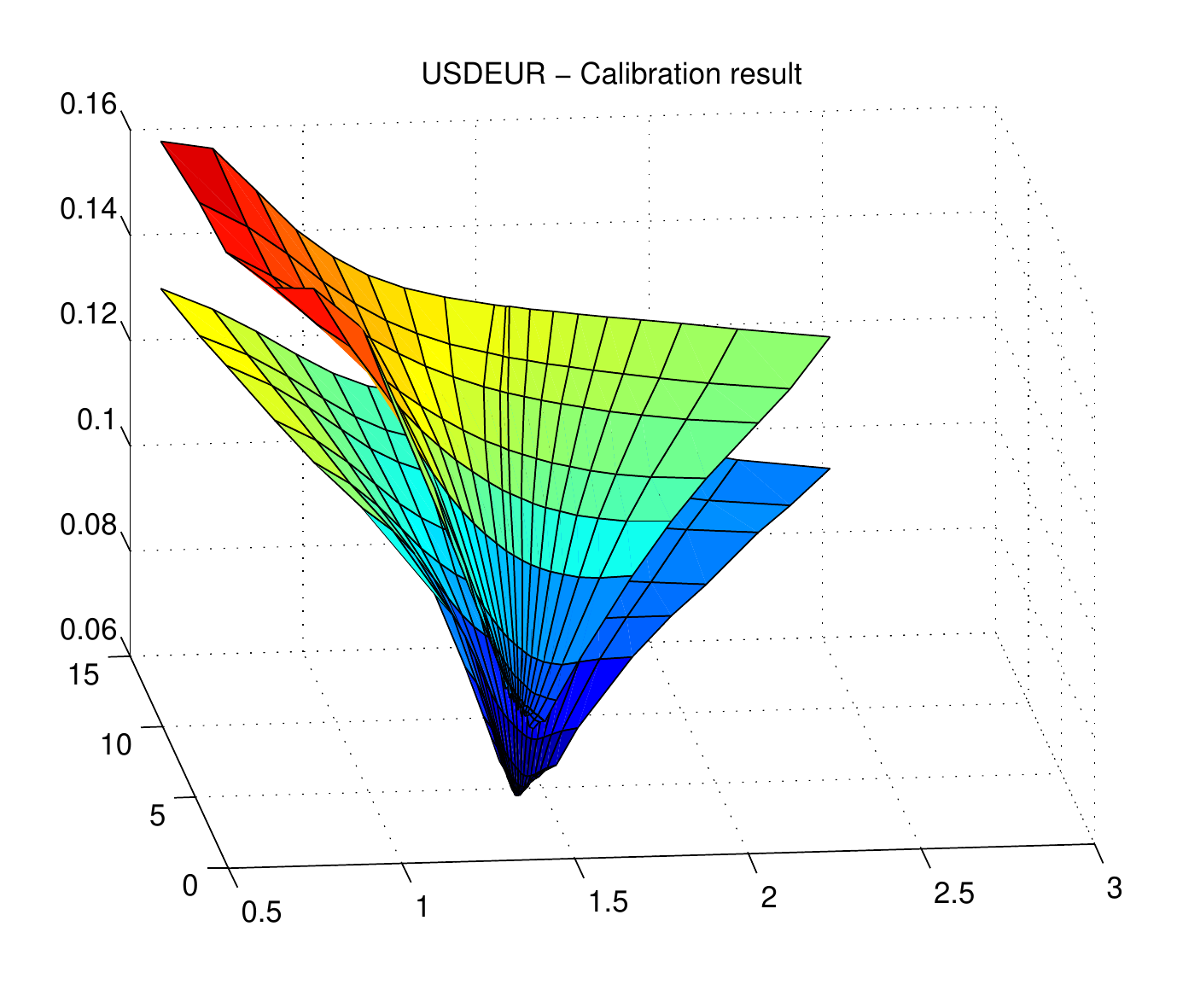}}
    \subfloat{\label{fig:fig113}\includegraphics[scale=0.50]{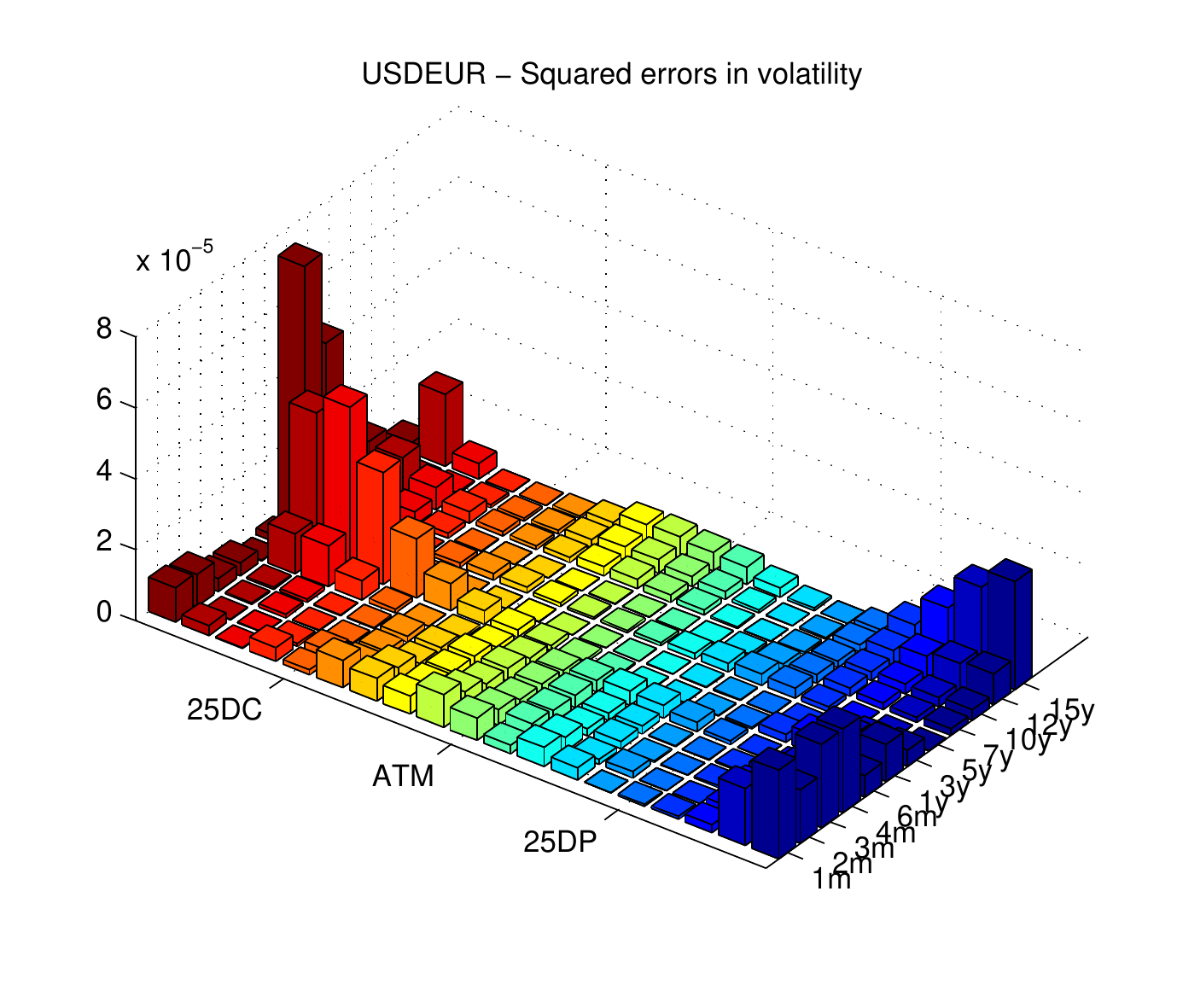}} \                                  
\caption{We consider a joint calibration of the implied volatility surface of USDEUR and of the two yield curves of USD and EUR simultaneously. The figure on the top compares the implied volatility surface to observed market data. As the fit is very good and the two surfaces are almost indistinguishable, on the bottom left figure we shift upwards the model implied volatility by multiplying each point by 1.1 so as to encrease readability. On the bottom right we report squared errors in implied volatility for each point of the surface.}
  \label{fig:stoch_rate_1}
\end{figure}

\begin{figure}[H]
  \centering                           \subfloat{\label{fig:fig121}\includegraphics[scale=0.50]{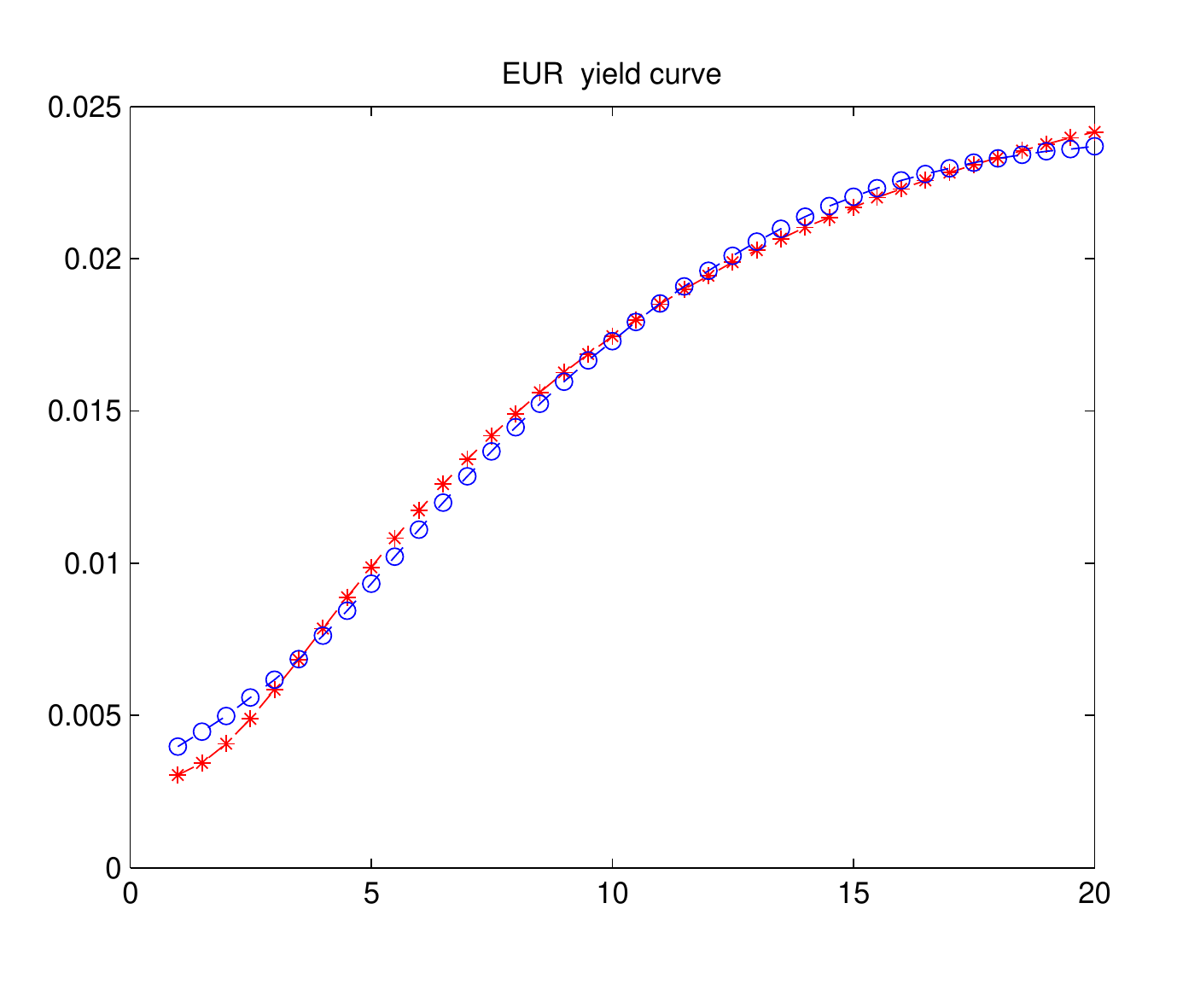}} \
\subfloat{\label{fig:fig122}\includegraphics[scale=0.50]{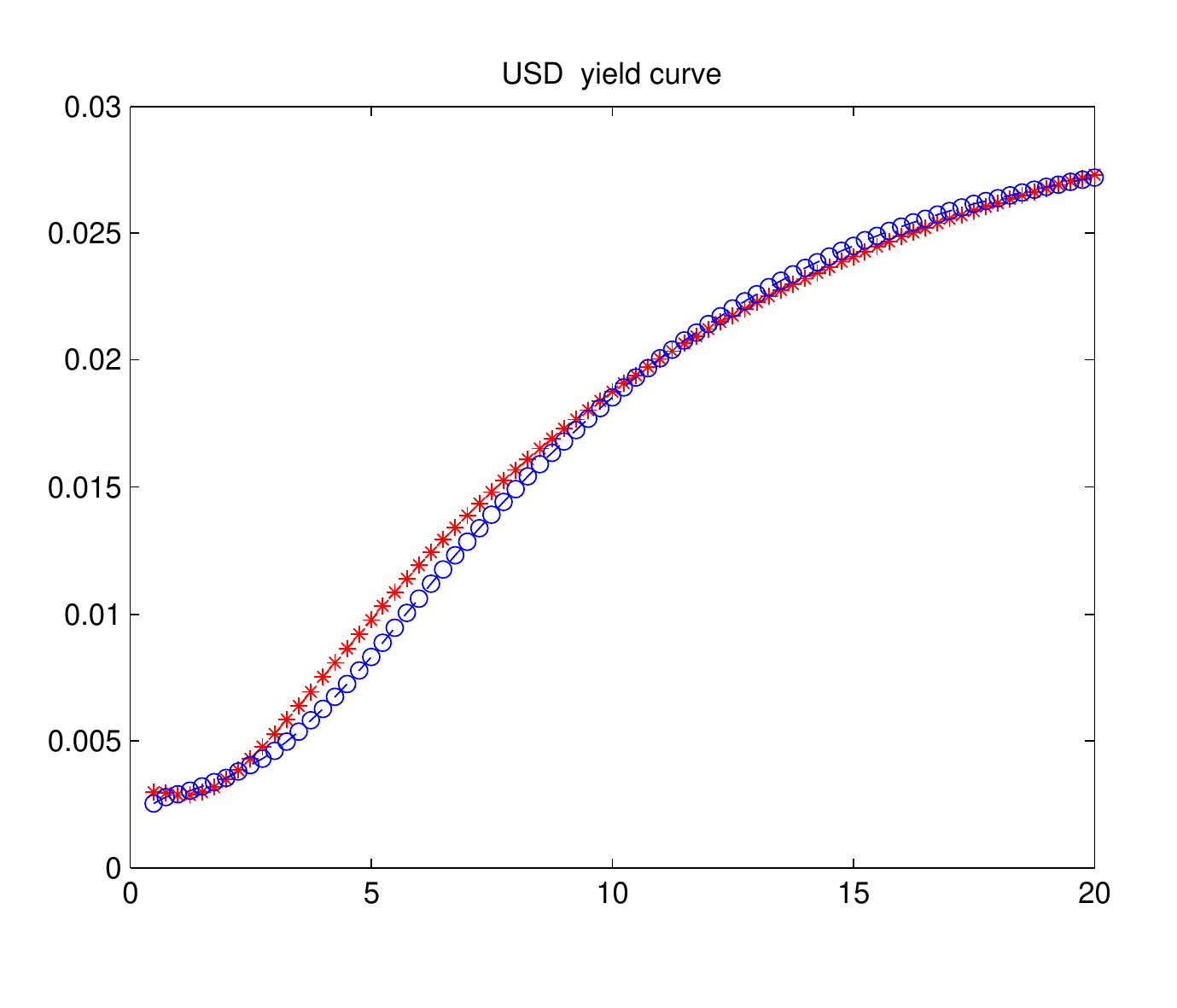}} \           
\caption{We consider a joint calibration of the implied volatility surface of USDEUR and of the two yield curves of USD and EUR simultaneously. Stars and circles denote model and market implied yields respectively.}
  \label{fig:stoch_rate_2}
\end{figure}

\appendix

\bibliography{biblio}
\bibliographystyle{apalike}
\bibliographystyle{apa}

\end{document}